\newtheorem{theorem}{Theorem}
\newtheorem*{theorem*}{Theorem}
\newtheorem{proposition}[theorem]{Proposition}
\newtheorem*{proposition*}{Proposition}
\newtheorem{lemma}[theorem]{Lemma}
\newtheorem*{lemma*}{Lemma}
\newtheorem{corollary}[theorem]{Corollary}
\newtheorem{definition}[theorem]{Definition}
\newtheorem*{duplicate*}{}
\newcommand{\ket}[1]{|#1\rangle}
\newcommand{\ketbra}[2]{|#1\rangle\langle #2|}
\newcommand{\tr}[0]{\textnormal{Tr}}
\begin{document}
\title{Clean quantum and classical communication protocols}
\author{Harry Buhrman}
\affiliation{QuSoft, CWI Amsterdam and University of Amsterdam, Science Park 123, 1098 XG Amsterdam, Netherlands}
\author{Matthias Christandl}
\affiliation{QMATH, Department of Mathematical Sciences, University of Copenhagen, Universitetsparken 5, 2100 Copenhagen, Denmark}
\author{Christopher Perry}
\affiliation{QMATH, Department of Mathematical Sciences, University of Copenhagen, Universitetsparken 5, 2100 Copenhagen, Denmark}
\author{Jeroen Zuiddam}
\affiliation{QuSoft, CWI Amsterdam and University of Amsterdam, Science Park 123, 1098 XG Amsterdam, Netherlands}

\begin{abstract}
By how much must the communication complexity of a function increase if we demand that the parties not only correctly compute the function but also return all registers (other than the one containing the answer) to their initial states at the end of the communication protocol? Protocols that achieve this are referred to as \emph{clean} and the associated cost as the \emph{clean communication complexity}. Here we present clean protocols for calculating the Inner Product of two $n$-bit strings, showing that (in the absence of pre-shared entanglement) at most $n+3$ qubits or $n+O(\sqrt{n})$ bits of communication are required. The quantum protocol provides inspiration for obtaining the optimal method to implement distributed \emph{CNOT} gates in parallel whilst minimizing the amount of quantum communication. For more general functions, we show that nearly all Boolean functions require close to $2n$ bits of classical communication to compute and close to $n$ qubits if the parties have access to pre-shared entanglement. Both of these values are maximal for their respective paradigms.
\end{abstract}
\maketitle

\emph{Introduction.} In a communication task two players, traditionally named Alice and Bob, receive inputs $x$ and $y$ and wish to calculate the value of some function $f$. To achieve this, messages will have to be exchanged between them and, depending on the resources available to them, these may consist of classical or quantum communication in the form of bits and qubits respectively. Typically in such scenarios one is interested in minimizing the amount of communication that has to take place to evaluate the function and the number of bits/qubits that must be exchanged to do this is referred to as the classical/quantum \emph{communication complexity} \cite{yao1979some,yao1993quantum}.

A protocol for calculating a function will act on three distinct types of registers. Each player will receive an input register, containing $x$ or $y$, and an ancillary working space, initialized in some standard state such as a string of bits all set to $0$, a number of qubits provided in the $\ket{0}$ state or possibly containing entangled states shared between the parties. The final type of register is the answer register which will contain the value of $f\left(x,y\right)$ at the end of the protocol. On the completion of a generic protocol for computing $f$, the input and ancillary registers will no longer be in their starting states and will depend upon both $x$ and $y$.

However, leaving these registers in such states can be problematic. Firstly, if Alice and Bob wish to keep private the particular protocol that they ran, then discarding these unclean states may leak information regarding this to a third party. Secondly, in the quantum setting, if the players wish to run the protocol over a superposition of input states (perhaps as a subroutine of a larger computation), then allowing the ancillary registers to end up in some unclean, input dependent state and then discarding them can lead to a loss of coherence in the superposition over answers. Finally, the players' computational space may be in short supply and without knowing the registers' final states they cannot easily use them for future calculations.

To avoid such issues we can demand that a protocol (in addition to computing $f$) returns the input and ancillary registers to their starting state. Following \cite{cleve1999quantum}, we call such a protocol \emph{clean} and the minimum number of bits/qubits that a clean protocol needs to exchange to compute a given function is the \emph{clean communication complexity}. We shall denote these quantities by $\textit{C}_{\textit{clean}}\left(f\right)$ and $\textit{Q}_{\textit{clean}}\left(f\right)$. In the case where the players have access to pre-shared entanglement (which they must restore at the end of the protocol), the associated cost will be written $Q^{*}_{\textit{clean}}$. We focus on the scenario where the players must compute the function exactly.

In all three scenarios, an unclean communication protocol can be converted into a clean one at the cost of doubling the communication. To do this, the players run the unclean protocol, copy the output to another location and then run the unclean protocol backwards. At first glance it may appear that clean, classical protocols are even easier to construct: the players keep a copy of their input and then simply erase all ancillary bits once the protocol is complete. However, Landauer's principle \cite{landauer1961irreversibility, bennett1973logical, bennett2003notes} implies that such irreversible manipulations will generate heat or else cost work. As such, if one is interested in avoiding such costs, it makes sense to consider protocols where all operations must be reversible. In light of these constructions, it is natural to ask: do more efficient clean protocols, without this doubling in communication, exist?

In the first part of this paper we focus on the clean communication complexity of computing the Inner Product of two distributed bit strings of length $n$, showing that (in the absence of pre-shared entanglement) this can be done by exchanging $n+3$ qubits. As a clean protocol for this function must exchange at least $n+1$ qubits, this is very close to tight. We also provide a clean, classical protocol that computes Inner Product while exchanging only $n+\textit{O}\left(\sqrt{n}\right)$ bits. This provides a saving over the most obvious protocol which, as we shall show, are close to optimal for the clean, classical computation of most functions.

A variation on our quantum protocol can be used to implement $n$ copies of a \emph{CNOT} gate in parallel by exchanging $n+1$ qubits. In a quantum computing architecture consisting of distributed clusters of highly controllable qubits linked by quantum communication (such as that envisaged in \cite{kielpinski2002architecture}), it is prudent to minimize the number of qubits exchanged. Our implementation is optimal.

Next we turn to the clean communication complexity of random functions on inputs of length $n$. We show here that in contrast to Inner Product, nearly all functions are such that $\textit{C}_{\textit{clean}}\left(f\right)$ is close to the maximal $2n$: the simple method of generating clean protocols discussed above is near optimal. On the quantum side, we find that $Q^{*}_{\textit{clean}}\left(f\right)$ is close to $n$ for most functions. As superdense coding \cite{bennett1992communication}  allows all functions to be uncleanly computed while exchanging $\frac{n}{2}$ qubits when the players pre-share entanglement, this is again close to maximal. Whether similarly $Q_{\textit{clean}}\left(f\right)$ is close to $2n$ remains an open question.

\emph{Clean Protocols.} Clean protocols have a long history in proving bounds in the model of quantum communication complexity with free entanglement assistance \cite{cleve1997substituting}. For example, considering clean, quantum protocols for the Inner Product function was used to imply that any entanglement assisted quantum protocol for this function must use at least $\left\lceil n/2\right\rceil$ qubits \cite{cleve1999quantum}. By making use of superdense coding to transmit one player's input to the other, this bound can be achieved. Clean protocols have also been used to show a lower bound on the entanglement assisted, quantum communication complexity \cite{buhrman2001communication} and that, in this model of communication, most functions have complexity that scales linearly in $n$ \cite{montanaro2007lower}. Cleanliness has also been used to analyze privacy amongst honest players \cite{klauck2002quantum}, bound the amount of quantum communication required to implement distributed quantum computation \cite{nielsen2003quantum} and for constructing resource inequalities that carefully account for the way protocols can be combined \cite{harrow2010time,harrow2009entanglement}.

More formally, a clean, quantum protocol for computing a function $f:\left\{0,1\right\}^n\times\left\{0,1\right\}^n\rightarrow\left\{0,1\right\}$ is defined as follows \cite{cleve1999quantum}. The initial state at the beginning of the protocol is of the form:
\begin{equation}
\ket{x}_A\ket{\vec{0}}_{A_0}\ket{y}_B\ket{\vec{0}}_{B_0}\ket{\Phi}_{A_EB_E}\ket{z}_{B_{\textrm{ans}}},
\end{equation}
where $\ket{x}_A=\bigotimes_{i=1}^{n}\ket{x_i}_{A_i}$ and $\ket{y}_B=\bigotimes_{i=1}^{n}\ket{y_i}_{B_i}$ are Alice and Bob's respective inputs stored in $n$ qubits, $\ket{\vec{0}}_{A_0}$ and $\ket{\vec{0}}_{B_0}$ their qubit ancillas, $\ket{\Phi}_{A_EB_E}$ their pre-shared entanglement (if supplied) and $\ket{z}_{B_{\textrm{ans}}}$ is the initial state of the answer register with $z\in\left\{0,1\right\}$. Throughout this paper we will assume that at the beginning and end of a protocol the answer register is held by Bob.

Players then take turns to act on their share of the qubits. In each turn a player will apply a unitary transformation to the qubits in their possession and then send some subset of them to the other player. The protocol computes $f$ cleanly if the final state of the qubits is:
\begin{equation}
\ket{x}_A\ket{\vec{0}}_{A_0}\ket{y}_B\ket{\vec{0}}_{B_0}\ket{\Phi}_{A_EB_E}\ket{z\oplus f\left(x,y\right)}_{B_{\textrm{ans}}},
\end{equation}
where the addition in the answer register is modulo 2. Clean classical protocols are defined similarly but with registers and communication given in terms of bits rather than qubits and no entanglement. All transformations must be reversible.

\emph{Inner Product.} The specific function that we shall focus on in this paper is the Inner Product function, $\textit{IP}_n$. This is defined by:
\begin{align}
\begin{split}
&\textit{IP}_n:\left\{0,1\right\}^n\times\left\{0,1\right\}^n\rightarrow\left\{0,1\right\},\\
&\textit{IP}_n\left(x,y\right)=\sum_{i=1}^{n} x_i\cdot y_i \quad \textrm{mod 2}.
\end{split}
\end{align}
It is well known that for both players to know the answer, at least $n$ bits of classical communication are needed to (uncleanly) compute $\textit{IP}_n$ exactly \cite[Example 1.29]{kushilevitz1997communication}. For quantum strategies in which the players pre-share entanglement, $\left\lceil\frac{n}{2}\right\rceil$ qubits must be sent \cite{cleve1999quantum} to achieve the same goal. In \cite{cleve1999quantum}, it is also shown that clean, quantum protocols for computing $\textit{IP}_n$ must exchange at least $n$ qubits. The quantum communication required to uncleanly compute $\textit{IP}_n$ without the help of prior entanglement is unknown (though must lie between $\left\lceil\frac{n}{2}\right\rceil$ and $n$). For quantum protocols that are allowed to err with fixed probability less than $1/2$, the complexity is still $\Omega\left(n\right)$ \cite{kremer1995quantum}.

Here we shall examine the clean communication complexity of $\textit{IP}_n$ without entanglement assistance. To this end, we first consider the quantum communication complexity of implementing the transformation:
\begin{equation} \label{eq:IP in phase}
\ket{x}_A\ket{y}_B\mapsto \left(-1\right)^{x\cdot y}\ket{x}_A\ket{y}_B,
\end{equation}
i.e. the distributed computation of the inner product of $x$ and $y$ in the phase. Such a transformation corresponds to performing \emph{controlled-Z} gates across $n$ pairs of qubits and by a suitable local basis change this can be converted into an implementation of $n$-fold \emph{CNOTs}.

In \cite{harrow2004coherent} it was shown that $2$ qubits of communication together with sharing $4$ ebits is exactly equivalent as a resource to the ability to implement $2$ \emph{CNOT} gates and sharing $4$ ebits. As such, this provides a protocol for implementing $\textit{IP}_n$ in the phase using $n+8$ qubits of communication and 8 ancilla qubits (for even $n$). This can be adapted to give a protocol requiring $n+2$ qubits of communication for even $n$ and $n+3$ qubits when $n$ is odd. In the following lemma, we give an improved, optimal protocol:
\begin{lemma} \label{le:IP in phase}
The clean, quantum communication complexity of exactly implementing $\textit{IP}_n$ in the phase satisfies:
\begin{equation}
\textit{Q}_{\textit{clean}}\left(\textit{IP}_{n}^{\textit{phase}}\right)=n+1.
\end{equation}
One ancilla qubit is required.
\end{lemma}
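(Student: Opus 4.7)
\emph{Proof plan.} The lemma asserts an equality, so both an upper and a lower bound on $\textit{Q}_{\textit{clean}}(\textit{IP}_n^{\textit{phase}})$ are required.

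For the upper bound ($\leq n+1$), I would construct an explicit clean protocol using one ancilla qubit and $n+1$ qubits of communication. The key observation is that the target unitary $U = \prod_{i=1}^n \mathrm{CZ}(A_i,B_i)$ is diagonal in the computational basis, so Alice's register $\ket{x}_A$ is left unchanged (up to a global phase) after any local application of these CZ gates. A naive shuttle protocol---Alice sends each $A_i$ to Bob, who applies $\mathrm{CZ}(A_i,B_i)$ and returns $A_i$---uses $2n$ qubits. To cut this roughly in half, I would have Alice first couple her ancilla $C$ to her input register via a Hadamard-plus-CNOT pattern, and then send her $n$ input qubits to Bob in a single batch, allowing him to apply all $n$ CZs locally at zero additional communication cost. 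The role of the ancilla is to coherently compress the return traffic, so that Bob can restore $\ket{x}_A$ on Alice's side at the cost of only one more transmission. A careful state-tracking calculation should then confirm that (i) the phase $(-1)^{x\cdot y}$ is applied globally, (ii) both input registers and the ancilla $C$ return to their initial states, and (iii) the total communication is exactly $n+1$ qubits.

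For the lower bound ($\geq n+1$), I would appeal to the operator Schmidt rank of $U$ across the Alice-Bob bipartition. Since each individual $\mathrm{CZ}$ has operator Schmidt rank $2$ and the $n$ CZs act on disjoint qubit pairs, the operator Schmidt rank of $U$ is exactly $2^n$. A simulation-style argument in the spirit of \cite{cleve1999quantum} then forces any exact protocol implementing $U$ to exchange at least $n$ qubits. To push the bound to $n+1$, I would invoke the cleanliness constraint: an $n$-qubit protocol would have to simultaneously imprint $(-1)^{x\cdot y}$ and disentangle every ancilla from both inputs, and a dimension/parity count rules this out without a further qubit of communication.

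\emph{Main obstacle.} The hardest part is the explicit upper-bound protocol. Near-optimal clean protocols at cost $n+O(1)$ already follow from the coherent-communication results of \cite{harrow2004coherent}, but hitting the precise value $n+1$ requires a tailored construction in which essentially every communicated qubit does double duty---contributing both to phase accumulation and to restoring Alice's register---and in which the ancilla can be uncomputed exactly at the end. Verifying the three requirements (correct phase, full cleanliness, and tight communication count) simultaneously is where the main technical care lies.
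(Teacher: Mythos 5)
There is a genuine gap, and it is in the part you yourself identify as the crux: the upper-bound protocol. Your plan---send all $n$ of Alice's input qubits to Bob in one batch, let him apply the $n$ \emph{CZ}s locally, and then ``coherently compress the return traffic'' so that one further transmission restores $\ket{x}_A$---cannot work. Cleanliness requires that Alice end the protocol holding her input register in the state $\ket{x}$, and $x$ is an $n$-bit string unknown to Bob. The paper's own lower-bound analysis (Lemma~\ref{le:IP leak}) shows that any clean protocol for $\textit{IP}_n^{\textit{phase}}$ must leak at least $n$ bits of information from Bob to Alice as well as $n$ bits from Alice to Bob; since a single qubit of communication can convey at most $2$ bits of information, at least roughly $n/2$ qubits must travel in \emph{each} direction. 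A batch protocol with only one return transmission is therefore information-theoretically impossible, ancilla or no ancilla. The construction the paper actually uses is an interleaved shuttle: a single ancilla qubit bounces between the players, and every transmission does double duty in a different sense than you propose---it carries one fresh input bit forward \emph{and} returns the previously received bit so that the sender can uncompute it. Concretely, Alice sends $\ket{x_1}$; Bob applies the phase $(-1)^{x_1 y_1}$, XORs $y_2$ onto the qubit, and returns it; Alice strips off $x_1$, applies $(-1)^{x_2 y_2}$, XORs on $x_3$, and so on. After $n$ rounds the phase $(-1)^{x\cdot y}$ has accumulated and one final transmission cleans the ancilla, giving $n+1$.

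Your lower bound is also incomplete where it matters. An operator-Schmidt-rank (or log-rank) argument in the style of \cite{cleve1999quantum} does give $\textit{Q}_{\textit{clean}}(\textit{IP}_n^{\textit{phase}})\geq n$, but the improvement to $n+1$ is not a ``dimension/parity count.'' The paper gets it from quantum information cost: cleanliness forces $\textit{QLA}\geq n$ and $\textit{QLB}\geq n$, so an $n$-qubit protocol would need every transmitted qubit to leak the maximal $2$ bits; but without pre-shared entanglement the very first message $C$ satisfies $I(C:R)=S(C)-S(C|R)<2$ because $S(C)\leq 1$ and $\rho_{CR}$ is separable. That separability argument is the missing ingredient, and nothing in your sketch substitutes for it.
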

(Without using ancilla qubits, $n+1$ qubits for odd $n$ and $n+2$ for even $n$ suffice.)
\begin{proof}
The $n+1$ qubit protocol for even $n$ is as follows. Alice initially prepares an ancilla qubit in the state $\ket{x_1}$ and sends it to Bob who applies a phase of $\left(-1\right)^{x_1\cdot y_1}$. He then adds $y_2$ to the communication qubit and sends it back to Alice in the state $\ket{x_1+y_2}$. Now, Alice cleans up her previous communication by subtracting $x_1$ from the communication and then uses the value of $y_2$ to apply the phase $\left(-1\right)^{x_2\cdot y_2}$. She then adds $x_3$ to the communication qubit to leave it in the state $\ket{y_2\oplus x_3}$ and sends it back to Bob. A schematic of these first rounds is given in Figure \ref{fig:PhaseIP}.
\begin{figure}
\centering
\includegraphics[width=.9\linewidth]{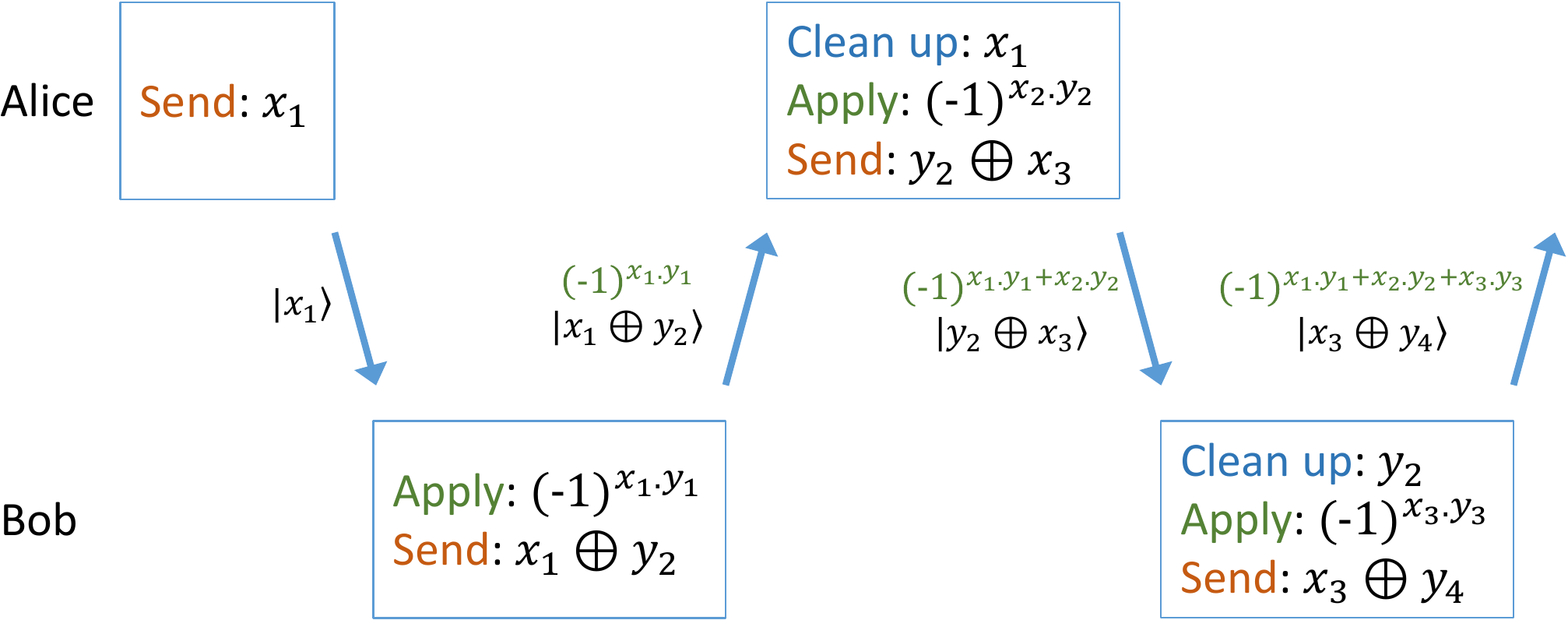}
\caption{\emph{Clean, quantum protocol for calculating $\textit{IP}_n$ in the phase.} Here we illustrate the first 4 rounds of communication. In each round, a player cleans up the message they sent previously, applies the relevant global phase and communicates the next bit of their input string.}
\label{fig:PhaseIP}
\end{figure}

The players then proceed similarly, with each round of communication being used to convey a new bit to the other party and send a received bit back in order to clean the ancilla qubit. After $n$ rounds, the global phase will be $\left(-1\right)^{x\cdot y}$ and Alice will hold the communication qubit in the state $\ket{y_n}$. She sends this back to Bob who cleans it, completing the protocol using $n+1$ qubits of communication and the change in ownership of one ancilla qubit. For odd $n$, Alice will perform the final cleaning step. The full protocol to implement the transformation without an ancilla qubit is given in Appendix \ref{ap:IP in phase}. 

The proof of the lower bound is in Appendix~\ref{ap:phase lb}. It is based upon the concept of information complexity \cite{touchette2015quantum} and showing that in a clean protocol for implementing Eq.~\eqref{eq:IP in phase} $n$ bits of information must flow in each direction. In the absence of pre-shared entanglement, we show that $n$ qubits of communication cannot achieve this.
\end{proof}
The above lemma provides the optimal method for implementing $n$ \emph{CZ} gates in parallel while exchanging $n+1$ qubits. Such a protocol would prove useful for quantum computing architectures where quantum communication is used to interface and implement gates between clusters of highly controllable qubits. As an example, in quantum error correction one could imagine using the Steane code \cite{steane1996multiple} to protect 2 logical qubits using 2 spatially separated clusters of 7 physical qubits. To implement a \emph{CZ} gate between the logical qubits requires 7 \emph{CZ}s to be performed in parallel between the physical qubits.

Our protocol achieves this while exchanging only 8 qubits whereas the naive protocol would send 14 qubits. Protocols based solely on shared entanglement and classical communication \cite{gottesman1998heisenberg, eisert2000optimal, collins2001nonlocal} use 7 pairs of ebits, 14 bits of communication and the implementation of 14 measurements while their coherent counterpart \cite{harrow2004coherent} requires 1 shared ebit and 8 qubits of communication.

In Appendix \ref{ap:clean quant IP} we give a clean quantum protocol for computing $\textit{IP}_n$:
\begin{theorem} \label{th:clean quant IP}
The clean, quantum communication complexity of exactly computing $\textit{IP}_n$ satisfies:
\begin{equation}
n+1\leq \textit{Q}_{\textit{clean}}\left(\textit{IP}_n\right)\leq \begin{cases}
n+3 & \text{for }n \text{ odd},\\
n+2 & \text{for }n \text{ even}.
\end{cases}
\end{equation}
No ancilla qubits are required.
\end{theorem}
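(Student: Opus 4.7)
My plan is in two pieces: a lower bound that reduces to Lemma~\ref{le:IP in phase}, and an upper-bound construction that uses the phase protocol of that lemma as a subroutine together with phase kickback on the answer register.

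For the lower bound, the idea is that any clean protocol for $\textit{IP}_n$ yields a clean protocol for $\textit{IP}_n^{\textit{phase}}$ with the same amount of communication. Bob locally prepares the answer register in $\ket{-}=\tfrac{1}{\sqrt{2}}(\ket{0}-\ket{1})$ and runs the clean $\textit{IP}_n$ protocol on $\ket{x}_A\ket{y}_B\ket{-}_{B_{\textrm{ans}}}$. By linearity, the update $\ket{z}\mapsto\ket{z\oplus f(x,y)}$ sends $\ket{-}$ to $(-1)^{f(x,y)}\ket{-}$, so the inner product appears as a global phase while the answer register is returned to $\ket{-}$ (which Bob can uncompute locally); all other registers are restored by cleanliness of the assumed protocol. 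Hence $\textit{Q}_{\textit{clean}}(\textit{IP}_n^{\textit{phase}})\leq\textit{Q}_{\textit{clean}}(\textit{IP}_n)$, and combining with Lemma~\ref{le:IP in phase} gives the lower bound $n+1$.

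For the upper bound, I would have Bob apply a Hadamard to the answer register, run a \emph{controlled} version of the phase protocol of Lemma~\ref{le:IP in phase} with the answer register as the control, and then apply another Hadamard; this maps $\ket{z}_{B_{\textrm{ans}}}$ to $\ket{z\oplus f(x,y)}_{B_{\textrm{ans}}}$, as required. The obstacle is that Alice and Bob alternate which of them applies the $(-1)^{x_i y_i}$ phase gates in the protocol of Lemma~\ref{le:IP in phase}, so the control qubit must be accessible to whichever party is currently acting. The most transparent fix, which gives the $n+3$ bound directly, is for Bob to CNOT the answer register into a fresh qubit, send the copy to Alice (one extra qubit), let each party control their local phase gates on whichever copy they physically hold, send the copy back (one more qubit), and uncompute the CNOT before the final Hadamard. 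Combined with the $n+1$ qubits from Lemma~\ref{le:IP in phase}, this accounts for the $n+3$ bound, and the copy qubit is returned to $\ket{0}$.

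The delicate part is sharpening to $n+2$ for even $n$ and carrying out the whole construction without ancillas, as the theorem asserts. I expect the saving to come from a parity argument about the phase protocol: for even $n$, the final qubit exchanged in Lemma~\ref{le:IP in phase} is sent from Alice to Bob, which is the same direction in which Alice must return the answer-register copy, so one of the two overhead qubits can be absorbed by threading the CNOT and final Hadamard through the last round of the phase protocol. For odd $n$ the two directions disagree and all $n+3$ qubits are needed. Verifying that this reshuffling preserves cleanliness—so that every input qubit and every ancilla is returned to its initial state, and that the controlled-phase modification commutes with the cleaning steps of each round—is the main bookkeeping obstacle, and is what I would work out in detail in the appendix.
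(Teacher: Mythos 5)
Your lower bound is sound and is one of the two routes the paper itself takes: preparing the answer register in $\ket{-}$ turns any clean protocol for $\textit{IP}_n$ into an equally expensive clean protocol for $\textit{IP}_n^{\textit{phase}}$, so Lemma~\ref{le:IP in phase} gives $n+1$. Your $n+3$ construction is also a valid protocol (it is essentially the paper's Appendix~\ref{ap:clean mix IP} protocol with the classical messages replaced by qubits), but note it is not ancilla-free: it needs the shuttling ancilla of Lemma~\ref{le:IP in phase} plus the answer-copy qubit, whereas the theorem asserts that no ancilla qubits are required.

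The genuine gap is the $n+2$ bound for even $n$. Your proposed saving --- absorbing one of the two answer-copy transmissions into the final round of the phase protocol because both travel from Alice to Bob --- does not work: the communication qubit of the phase protocol (carrying $\ket{y_n}$ back to Bob for cleaning) and the answer-copy qubit are two distinct qubits, and sending two qubits in the same direction still costs two qubits of communication. There is no parity coincidence to exploit, so your construction caps out at $n+3$ for both parities. The paper's $n+2$ protocol is structurally different and avoids the travelling answer copy altogether: Alice sends her own input qubits $A_1A_2$ to Bob, who turns them into the entangled carrier $\frac{1}{\sqrt{2}}\left(\ket{0}_{A_1}\ket{x_2}_{A_2}+\left(-1\right)^{x_1}\ket{1}_{A_1}\ket{\bar{x}_2}_{A_2}\right)$; in each round Alice superdense-codes $x_{2j-1}$ and $x_{2j}$ onto $A_1$, and Bob decodes $x_{2j}\oplus x_2$ and $x_{2j-1}\oplus x_1$ and performs every answer-register update locally (having pre-loaded the answer register with $x_1\cdot\sum_{i\textrm{ odd}}y_i+x_2\cdot\sum_{i\textrm{ even}}y_i$ to compensate for the offsets). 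Since nothing connected to the answer register ever travels to Alice and the input qubits themselves serve as carriers, the protocol uses $n+2$ qubits and no ancillas; the odd case costs one more round-trip for the last bit. That reuse of the input registers as the communication medium is the idea your proposal is missing.
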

By adapting the protocol from Lemma~\ref{le:IP in phase} we can also show that $\textit{IP}_n$ can be computed cleanly using 2 qubits and $n+1$ bits of classical communication. We give this protocol in Appendix \ref{ap:clean mix IP}.

Our novel quantum communication protocols inspire a classical protocol for Inner Product (given in Appendix \ref{ap:clean class IP}) which is near optimal and for which only the naive $2n$ protocol was known before:
\begin{theorem} \label{th:clean class IP}
The clean, classical communication complexity of exactly computing $\textit{IP}_n$ satisfies:
\begin{equation}
n+1\leq\textit{C}_{\textit{clean}}\left(\textit{IP}_n\right)\leq n + 4\sqrt{n}+\frac{1}{\sqrt{n}-1} +2.
\end{equation}
No ancilla bits are required.
\end{theorem}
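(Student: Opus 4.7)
The plan is to mirror the quantum ping-pong of Lemma~\ref{le:IP in phase} in the classical setting. Classically there is no phase to accumulate on a single travelling bit, so I would replace the single travelling qubit by a small travelling register of $\sqrt{n}$ block bits together with one travelling ``tally bit'', and replace phase accumulation by XORing the partial products $x^{(i)}\cdot y^{(i)}$ into the tally. Crucially, I would identify this travelling tally bit with Bob's answer register, so that its final value automatically lands in the right place and no separate uncomputation of the inner product is needed (which would otherwise double the communication).

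Concretely, split $x$ and $y$ into $k:=\lceil\sqrt{n}\rceil$ blocks $x^{(i)},y^{(i)}$ of size $b:=\lceil\sqrt{n}\rceil$ (padding with zeros if $n$ is not a perfect square), and use a communication register $C=(C_{\text{blk}},C_{\text{tally}})$ of $b+1$ bits. Bob starts by loading $y^{(1)}$ into $C_{\text{blk}}$, with $C_{\text{tally}}=z$ (his initial answer bit), and sends $C$ to Alice. On each subsequent hop the receiver strips their own previous block from $C_{\text{blk}}$ by XOR (the same bit-cancellation device that is used in Lemma~\ref{le:IP in phase}), XORs $p_i=x^{(i)}\cdot y^{(i)}$ into $C_{\text{tally}}$, loads their next block XOR'd with the block they have just consumed into $C_{\text{blk}}$, and forwards $C$.

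After all $k$ partial products have been folded in, at most two further hops suffice to clean the last block residue from $C_{\text{blk}}$ (by returning it to its input-owner, who XORs it away) and to land $C$ back at Bob with $C_{\text{tally}}=z\oplus \textit{IP}_n(x,y)$, which now serves as the answer register. Since $x,y$ are preserved throughout, $C_{\text{blk}}$ ends at $\vec 0$, and the tally bit is precisely the answer, the protocol is clean. Counting hops gives at most $(k+2)(b+1)$ bits; substituting $k=b=\lceil\sqrt{n}\rceil$ and carefully accounting for the ceiling yields the announced bound $n+4\sqrt{n}+\tfrac{1}{\sqrt{n}-1}+2$, with no ancillary bits used.

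For the lower bound $n+1\leq \textit{C}_{\textit{clean}}(\textit{IP}_n)$ I would invoke the classical lower bound $C(\textit{IP}_n)\ge n$ of~\cite{kushilevitz1997communication} --- every clean protocol is \emph{a fortiori} an unclean one --- together with a simple refinement showing that a strictly $n$-bit protocol cannot simultaneously leave Alice's side clean and deliver the answer to Bob. The main obstacle is the upper bound's last step: a naive ``compute, copy, uncompute'' strategy would double the communication to $2n+O(\sqrt n)$. Identifying $C_{\text{tally}}$ with the answer register from the very start --- so that the inner product is accumulated directly into what will be the answer rather than into a disposable register --- is precisely what averts the doubling and delivers the $n+O(\sqrt{n})$ rate.
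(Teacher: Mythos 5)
Your upper-bound construction is essentially the paper's: split the inputs into $\sqrt{n}$-sized blocks, ping-pong a single block-sized register whose contents are progressively masked and unmasked by XOR so that every bit sent is later cleaned by its owner, and accumulate the partial products directly into the travelling answer bit so that no compute--copy--uncompute doubling occurs. The one structural difference is that you fully strip each block on arrival and compute $x^{(i)}\cdot y^{(i)}$ against a block received in the clear, whereas the paper never strips $y^{(1)}$ from the travelling register: it keeps $y^{(1)}$ as a persistent mask, computes inner products of the masked register against local blocks, and cancels the resulting cross terms $y^{(1)}\cdot x^{(i)}$ and $y^{(1)}\cdot y^{(i)}$ by precomputing them in the opening two messages. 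Your variant is arguably simpler to verify; the paper's variant lets the travelling register literally \emph{be} Bob's input register $B^{(1)}$ (restored to $y^{(1)}$ at the end), which is what makes the ``no ancilla'' claim true. As written, your $C_{\textrm{blk}}$ starts and ends at $\vec{0}$ and is therefore a $\lceil\sqrt{n}\rceil$-bit ancilla; you should either accept that or adopt the paper's identification.

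Two quantitative loose ends. First, with $k=b=\lceil\sqrt{n}\rceil$ the total block capacity $kb$ can exceed $n$ by up to $2\sqrt{n}+1$, so $(k+2)(b+1)$ comes out as $n+5\sqrt{n}+O(1)$, not the stated $n+4\sqrt{n}+\frac{1}{\sqrt{n}-1}+2$; the paper avoids this by taking $k=\lfloor\sqrt{n}\rfloor$, letting the number of blocks be $l+1$ with $n=kl+r$, and padding only the final block, giving a count of $kl+3k+l+1$. Second, the lower bound: your ``simple refinement'' upgrading $C(\textit{IP}_n)\ge n$ to $n+1$ for clean protocols is asserted but not supplied, and it is the only part of the statement that does not follow from standard facts. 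The paper obtains it by observing that a clean classical protocol is in particular a clean quantum protocol, and invoking the bound $Q_{\textit{clean}}(f)\geq\log\mathrm{rank}\left(M^f\right)+1$ of Buhrman--de Wolf together with $\mathrm{rank}\left(M^{\textit{IP}_n}\right)\geq 2^n-1$ (equivalently, via the information-flow argument of Lemma~\ref{le:IP in phase}). You would need to either cite such a result or prove the $+1$ directly.
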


\emph{Generic functions.} In contrast to Theorem \ref{th:clean class IP}, we will show that nearly all Boolean functions on $n$-bit inputs require $2n-O\left(\log n\right)$ bits of classical communication to compute cleanly. The proof follows from the following two lemmas. In what follows, $X$ and $Y$ are the random variables for Alice and Bob's inputs and $A$ and $B$ are the random variables received by Alice and Bob respectively through the communication that takes place over the course of the protocol. By $|a|$ we denote the number of bits received by Alice and $|b|$ the number of bits received by Bob.
\begin{lemma} \label{le:Kol comp}
Consider picking uniformly at random a Boolean function $f_n$ on $n$-bit inputs. Then with probability $1-o(1)$, all protocols that compute $f_n$ exactly are such that either:
\begin{enumerate}
\item Alice must receive:
\begin{equation} \label{eq:|a| bound}
|a|\geq n-\log\left(n+1\right)-2,
\end{equation}
bits and there exists a uniform distribution over at least half the pairs of inputs such that:
\begin{equation} \label{eq:I(Y|AX) bound}
I\left(Y:AX\right)\geq n -\log\left(n+1\right)-3.
\end{equation}
\end{enumerate}
Or:
\begin{enumerate}
\setcounter{enumi}{1}
\item Bob must receive:
\begin{equation} \label{eq:|b| bound}
|b|\geq n-\log\left(n+1\right)-2,
\end{equation}
bits and there exists a uniform distribution over at least half the pairs of inputs such that:
\begin{equation}
I\left(X:BY\right)\geq n -\log\left(n+1\right)-3.
\end{equation}
\end{enumerate}
\end{lemma}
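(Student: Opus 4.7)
The plan is to combine a Kolmogorov-complexity counting argument for random Boolean functions with a rectangle-based analysis of the protocol's structure.

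First, I would establish via a standard counting argument that, with probability $1-o(1)$, a uniformly random Boolean function $f_n$ has Kolmogorov complexity at least $2^{2n}-\log(n+1)-O(1)$: among the $2^{2^{2n}}$ Boolean functions on $2n$-bit inputs, at most a $1/(n+1)$ fraction admit descriptions shorter than this. Any reconstruction of $f_n$ extracted from the protocol must therefore beat this lower bound.

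Next, I would invoke the standard rectangle decomposition of a deterministic exact protocol. The transcripts partition $\{0,1\}^n \times \{0,1\}^n$ into rectangles $R_t = A_t \times B_t$, where $A_t$ is the set of Alice-inputs consistent with her view (her input together with her $|a|$ received bits) and $B_t$ is analogous for Bob. Because Bob must output $f_n(x,y)$ from his view alone, $f_n$ is constant on every row $A_t\times\{y\}$ with $y\in B_t$; this is the structural constraint driving the remainder of the argument.

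I would then proceed by contrapositive: assuming the protocol witnesses neither case~1 nor case~2, derive a description of $f_n$ that beats the Kolmogorov bound from step~1. The contrapositive splits into symmetric sub-cases. If both $|a|$ and $|b|$ fall below $n-\log(n+1)-2$, the protocol has at most $2^{|a|+|b|}<2^{2n-2\log(n+1)-4}$ rectangles, and $f_n$ is specified compactly from the protocol structure together with one label bit per rectangle, undercutting the Kolmogorov bound. If instead (say) $|a|\geq n-\log(n+1)-2$ but no subset $S$ of at least half the pairs achieves $I(Y:AX)\geq n-\log(n+1)-3$, then using $I(Y:AX)=n-H(Y|AX)$ together with the identification of $H(Y|AX)$ as an expectation of $\log|B_t|$ over the transcript distribution, the rectangles must be ``large on average'' on every large $S$. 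This again allows a compressed description of $f_n$, contradicting step~1. The remaining combinations (small $|a|$ paired with failing $I(X:BY)$ bound, etc.) are handled analogously by swapping Alice and Bob.

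The main obstacle is the subset construction: given the protocol, I would take $S$ greedily as the union of the rectangles with smallest $|B_t|$ until $|S|\geq 2^{2n-1}$, so that the expected $\log|B_t|$ on $S$ is controlled by the largest rectangle admitted. Making the constants line up precisely with the $\log(n+1)$ thresholds, and verifying that the compressed description strictly beats $2^{2n}-\log(n+1)-O(1)$ bits in all sub-cases, is the delicate bookkeeping step of the proof.
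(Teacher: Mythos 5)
Your overall strategy---a Kolmogorov-complexity lower bound for random $f_n$, the rectangle partition with the ``striped'' structure coming from the fact that only Bob outputs the answer, and a distribution supported on the half of the input pairs lying in small rectangles---is the same as the paper's. However, two of your key steps have genuine gaps.

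First, in the sub-case where both $|a|$ and $|b|$ are small, you propose to describe $f_n$ by ``the protocol structure together with one label bit per rectangle.'' This fails for two reasons. The protocol structure is not a short object: the players' strategies are arbitrary functions of their inputs and partial transcripts, so the induced partition into rectangles has no description of length $o\left(2^{2n}\right)$ in general. And even granting the partition for free, one bit per rectangle does not determine $f_n$, because the rectangles are only monochromatically \emph{striped} (a fact you correctly state earlier but then drop): a rectangle $\mathcal{X}'\times\mathcal{Y}'$ needs $|\mathcal{Y}'|$ bits, one per stripe. The paper's mechanism avoids both problems: it takes a \emph{single} largest rectangle, which has size at least $2^{n-|b|}\times 2^{n-|a|}$, describes it explicitly by listing its row and column index sets at a cost of $n\left(2^{n-|b|}+2^{n-|a|}\right)+O\left(\log n\right)$ bits, pays $2^{n-|a|}$ bits for its stripes, and writes the rest of $M^{f_n}$ verbatim. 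The net saving $2^{2n-|a|-|b|}-n\left(2^{n-|b|}+2^{n-|a|}\right)-2^{n-|a|}$ beats the $\log n$ slack in the Kolmogorov lower bound unless $\max\left(|a|,|b|\right)\geq n-\log\left(n+1\right)-2$, which is the conclusion. Your ``few rectangles'' count gives no such saving.

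Second, your greedy construction of $S$ (union of rectangles with smallest $|B_t|$ until half the mass is covered) does not control the largest admitted $|B_t|$: if most of the mass sits in rectangles that are narrow in the $A$-direction but full-width in the $B$-direction, the greedy union is forced to include rectangles with $|B_t|=2^n$ and you get no bound on $H\left(Y|AX\right)$. What makes the argument work is the dichotomy delivered by the first part: for most $f_n$ \emph{every} rectangle satisfies $|A_t|\leq 4\left(n+1\right)$ or $|B_t|\leq 4\left(n+1\right)$, hence at least half of all input pairs lie in rectangles of one of these two types, and you take the uniform distribution over that majority class; which class it is determines whether you land in case 1 or case 2 of the lemma. (A small further correction: $I\left(Y:AX\right)=n-H\left(Y|AX\right)$ presumes $H\left(Y\right)=n$, but on a distribution uniform over only half the pairs you get $H\left(Y\right)\geq n-1$, which is the source of the $-3$ rather than $-2$ in the stated bound.)
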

\begin{proof}
The full proof is given in Appendix~\ref{ap:clean class}. To prove the first two bounds, begin by noting that the communication matrix $M^f$ (defined by $M^f_{xy}=f\left(x,y\right)$) of a random Boolean function has large \emph{Kolmogorov complexity} with high probability. However, a classical protocol for computing $f$ partitions the matrix into \emph{rectangles} (see Appendix \ref{ap:com comp}), each of which has low Kolmogorov complexity. If one of these rectangles is large enough (which happens when the amount of communication that takes place in one direction is small), then the Kolmogorov complexity of $M^f$ will also be low. Such an $M^f$ is shown in Figure~\ref{fig:ComplexRec}a. Comparing these two statements leads to the bounds on $\left|a\right|$ and $\left|b\right|$.

\begin{figure}
\centering
\includegraphics[width=0.9\linewidth]{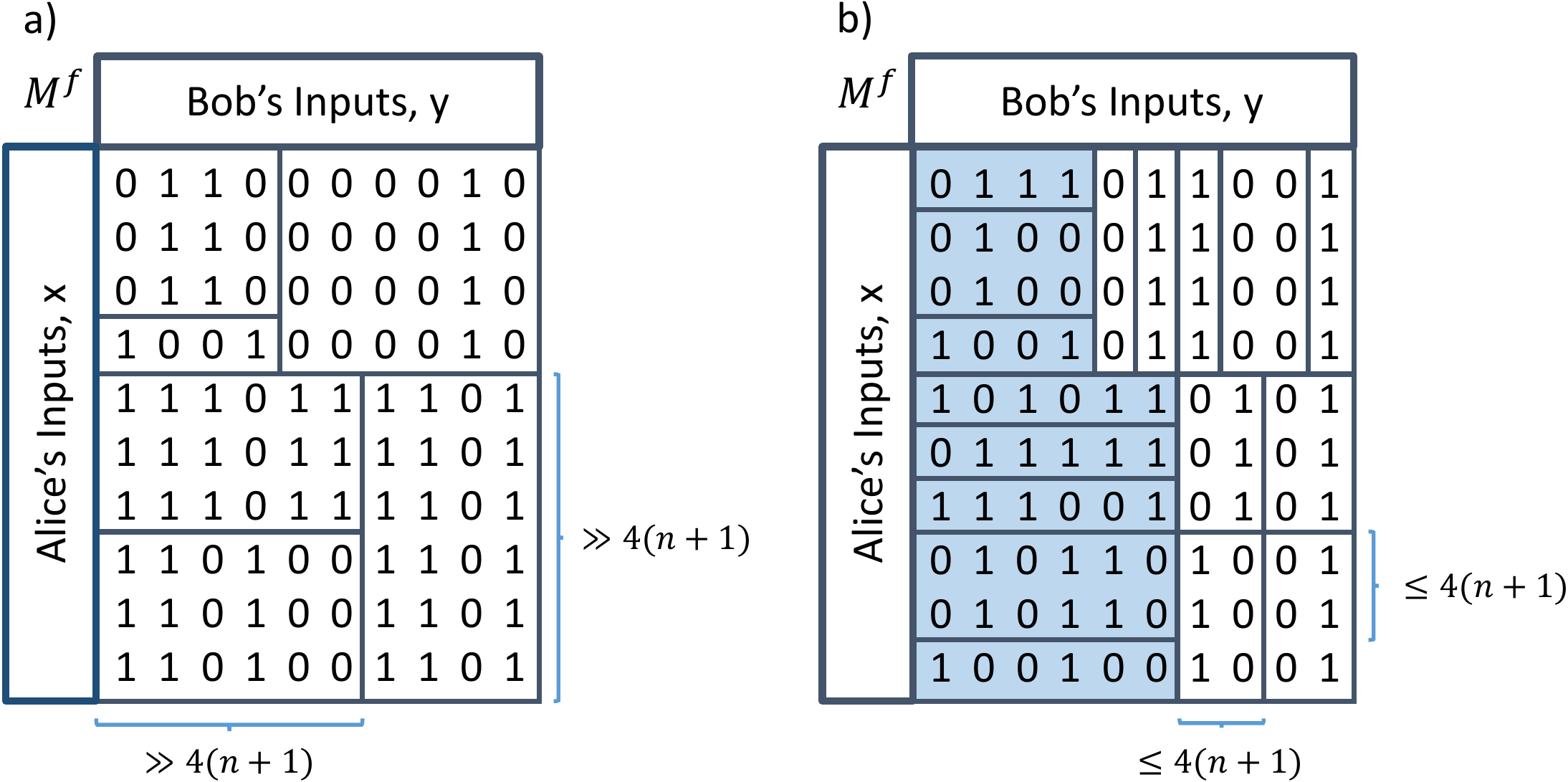}
\caption{\emph{Partitions of the communication matrix into rectangles.} Note that knowledge of $y$, together with knowledge of which rectangle the players' input pair belongs to, allows Bob to correctly deduce the value of $f\left(x,y\right)$. a) As there exists a protocol for computing $f$ that partitions $M^f$ into large rectangles, the Kolmogorov complexity of $M^f$ is low. b) For $M^f$ to have high Kolmogorov complexity, all protocols for computing $f$ must partition $M^f$ into either very narrow or very thin rectangles. 
To produce the bound in Eq.~\eqref{eq:I(Y|AX) bound}, we take a distribution over the shaded rectangles.}
\label{fig:ComplexRec}
\end{figure}

These bounds imply that the rectangles induced by any protocol for computing most $f_n$ must either be very short or very thin as shown in Figure \ref{fig:ComplexRec}b. In fact, they cannot be larger than $4\left(n+1\right)\times2^n$ nor $2^n\times 4\left(n+1\right)$. Either at least half the inputs will belong to very short rectangles or at least half the inputs will belong to very thin ones. By taking a distribution over the larger set, we induce a direction into the communication that occurs in the protocol to ensure that one of Eqs.~\eqref{eq:|a| bound} and \eqref{eq:|b| bound} holds and bound the related mutual information. For example, consider the case where more than half the input pairs lie in rectangles of size less than $2^n\times 4\left(n+1\right)$ (as shown in the figure) and the distribution over $x$ and $y$ is formed by picking Alice and Bob's inputs uniformly at random from such rectangles. Then, at the end of the protocol, Alice will know that Bob received one of at most $4\left(n+1\right)$ inputs and Eq.~\eqref{eq:|a| bound} will hold. Hence:
\begin{equation*}
I\left(Y:AX\right)=H\left(Y\right)-H\left(Y|AX\right)\geq n -\log\left(n+1\right)-3,
\end{equation*}
as required.
\end{proof}

The previous lemma indicates that to compute most functions, either Alice or Bob must receive close to the entirety of the other player's input. In the next lemma we shall see that a similar amount of information (and hence communication) must flow back in the other direction to make the protocol clean.
\begin{lemma} \label{le:Clean leak}
Let $f$ be a Boolean function and its inputs be chosen according to some distribution. Then, in a clean protocol for exactly computing~$f$:
\begin{equation} 
|b|\geq I\left(Y:XA\right)-I\left(X:Y\right),
\end{equation}
and:
\begin{equation}\label{eq:X:YB bound}
|a|\geq I\left(X:YB\right)-I\left(X:Y\right)-1.
\end{equation}
\end{lemma}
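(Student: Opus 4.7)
The plan is to exploit the reversibility that the clean classical condition forces on the players' local computations. Because every local step is reversible and all ancillas are restored to $\vec{0}$ at the end, I can bundle Alice's entire sequence of local operations (interleaved with sending/receiving bits, which is just a relabeling of who holds which bit) into a single reversible map. Since her input $X$ is preserved and her ancillas vanish at both ends, this map reduces to a bijective correspondence between $(X, A)$ and $(X, B)$: each is a deterministic function of the other, so as random variables $I(Y : XA) = I(Y : XB)$. The analogous argument on Bob's side, where $Y$ is preserved and the answer register undergoes $z \mapsto z \oplus f(X, Y)$, produces a bijective correspondence between $(Y, z, B)$ and $(Y, z \oplus f(X, Y), A)$; fixing $z = 0$ (without loss of generality, by conditioning) this becomes a bijection between $(Y, B)$ and $(Y, f(X,Y), A)$, so $I(X : YB) = I(X : Y, f(X,Y), A)$.

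With these two identities in hand, the inequalities follow from routine chain-rule manipulations. For the first bound I expand
\begin{equation*}
I(Y : XA) \;=\; I(Y : XB) \;=\; I(X:Y) + I(Y : B \mid X).
\end{equation*}
Since the protocol is deterministic given $(X, Y)$, the entire message record is a deterministic function of the inputs, so $H(B \mid X, Y) = 0$ and hence $I(Y : B \mid X) = H(B \mid X) \leq H(B) \leq |b|$, yielding the first claim. For the second I expand
\begin{equation*}
I(X : YB) \;=\; I(X : Y, f(X,Y), A) \;=\; I(X:Y) + I(X : f(X,Y), A \mid Y),
\end{equation*}
and bound the conditional mutual information by $H(f(X,Y), A \mid Y) \leq H(f(X,Y)) + H(A) \leq 1 + |a|$. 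The single bit contributed by $f(X,Y)$ is the source of the additive constant in~\eqref{eq:X:YB bound}.

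The hard part, which is otherwise routine, will be the careful bookkeeping needed to formalize the ``aggregate reversibility'' step in a genuinely interactive protocol: local operations and message exchanges interleave over many rounds, with registers growing and shrinking as bits move. One must verify, ideally by induction on rounds, that Alice's (respectively Bob's) total transformation really is a bijection from the disjoint union of her initial register and the sequence of incoming messages onto the disjoint union of her final register and the sequence of outgoing messages. This is the classical analogue of the unitarity bookkeeping used for clean quantum protocols in \cite{cleve1999quantum}, and it is the step where the asymmetric role of the one-bit answer register, held only by Bob, enters and produces the additional $-1$ in the second inequality but not the first.
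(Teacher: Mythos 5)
Your proposal is correct and follows essentially the same route as the paper's proof: both hinge on the observation that the players' local operations are deterministic and reversible, hence preserve mutual information, so that $I(X:YB)$ can be rewritten as $I\bigl(X:Y,f(X,Y),A\bigr)$ and then bounded via the chain rule, with the one-bit answer register on Bob's side accounting for the asymmetric $-1$. The only difference is bookkeeping: you compose all of a player's round maps into a single bijection and apply the chain rule once, whereas the paper substitutes $YB'_{i-1}B_i\mapsto YB'_iA_i$ round by round and telescopes the resulting sum $\sum_i I\left(X:A_i|YB'_iB_{i+1}\dots B_r\right)\leq |a|$.
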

\begin{proof}
The full proof can be found in Appendix~\ref{ap:clean class}. It revolves around considering a protocol as $r$ rounds in which each player speaks. A schematic of an individual round is shown in Figure \ref{fig:schematic}.
\begin{figure}
\centering
\includegraphics[width=0.9\linewidth]{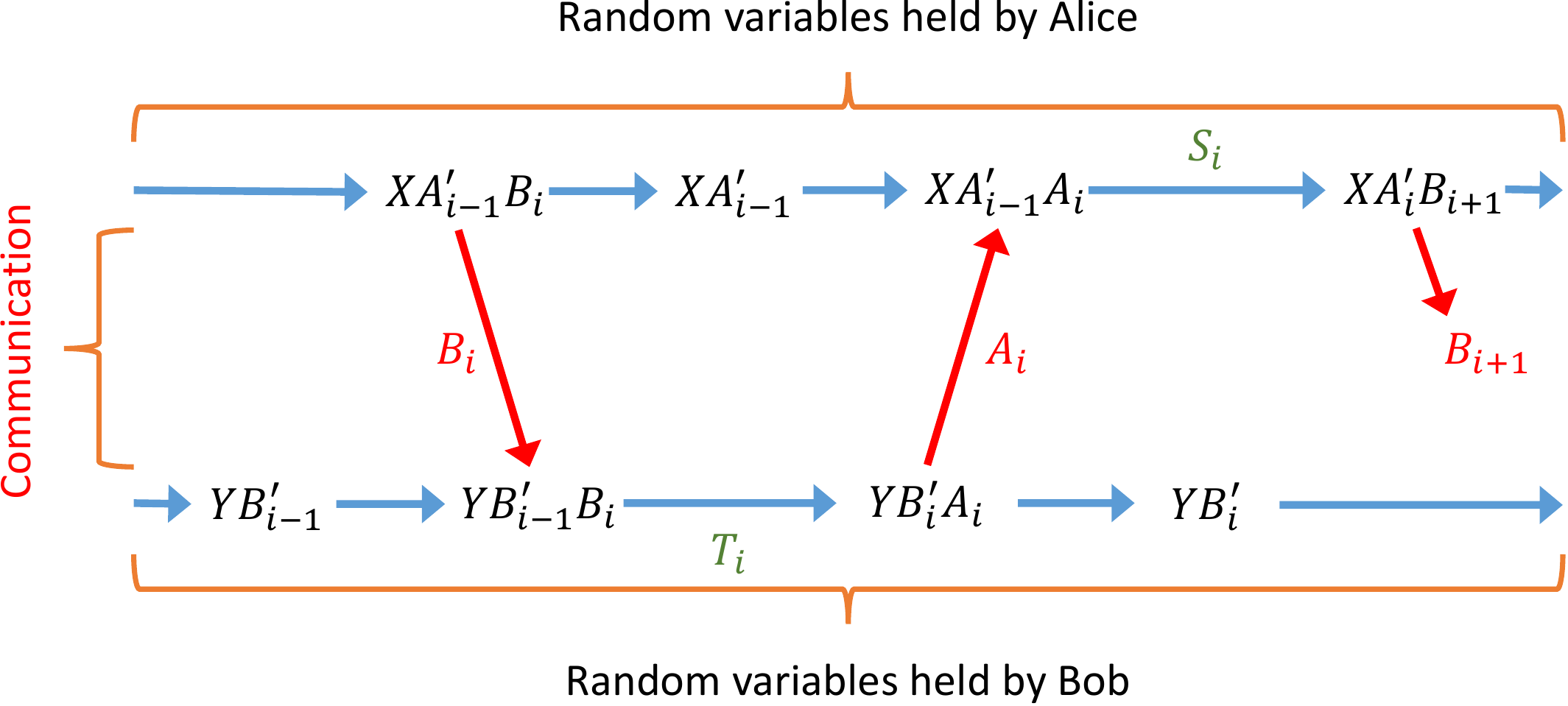}
\caption{\emph{Schematic of a classical communication protocol.} Here we show how the random variables held by each player change during round $i$ of a communication protocol. Primed variables denote local memories while non-primed variables are communication. Each player uses a deterministic, reversible function ($S_i$ and $T_i$) to determine their next message and update their local memory.}
\label{fig:schematic}
\end{figure}
The bounds are then constructed by noting that in each round the players' messages are produced by a deterministic, reversible function of their inputs, local memory (denoted by $A'_i$ and $B'_i$) and the last message received. To obtain (for example) Eq.~\eqref{eq:X:YB bound}, the chain rule for the conditional mutual information can then be used to write:
\begin{align*}
I\left(X:YB\right)=&I\left(X:Y\right) + I\left(X:B'_r|Y\right)\\
& + \sum_{i=1}^{r} I\left(X:A_i|YB'_i B_{i+1}\dots B_r\right)\\
\leq& I\left(X:Y\right)+1+|a|,
\end{align*}
where in the last line we have used the fact that that the protocol is clean and that the conditional mutual information can be upper bounded by the number of bits contained in $A_i$.
\end{proof}

Combining these two lemmas, together with the fact that $I\left(X:Y\right)\leq1$ for uniform distributions over at least half the possible inputs, we obtain our result:
\begin{theorem} \label{th:rand clas}
Consider exactly computing a Boolean function $f_n$ on $n$-bit inputs that has been picked uniformly at random. Then with probability $1-o(1)$:
\begin{equation}
\textit{C}_{\textit{clean}}\left(f_n\right)\geq 2n-2\log\left( n+1\right)-7.
\end{equation}
\end{theorem}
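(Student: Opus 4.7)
The plan is to combine Lemmas~\ref{le:Kol comp} and \ref{le:Clean leak} via a short case analysis, with no additional combinatorial work needed. First I would invoke Lemma~\ref{le:Kol comp}: with probability $1-o(1)$ over the random choice of $f_n$, any protocol for $f_n$ falls into either Case~1 (Alice receives many bits and there is a distribution forcing $I(Y:AX)$ large) or Case~2 (symmetric, with Bob and $I(X:BY)$). Since this holds for every protocol, the eventual lower bound must be the worse of the two cases.

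Next I would fix one case and feed the associated distribution into Lemma~\ref{le:Clean leak}. In Case~1 the distribution is uniform on at least half the input pairs, so $I(X:Y) \leq 1$ (the binary entropy of a distribution whose support has density $\geq 1/2$ in the product is at most $1$ above independent). Then Lemma~\ref{le:Clean leak} yields
\begin{equation*}
|b| \geq I(Y:XA) - I(X:Y) \geq \bigl(n - \log(n+1) - 3\bigr) - 1,
\end{equation*}
which combined with the $|a|$ bound from Lemma~\ref{le:Kol comp} gives total communication at least $2n - 2\log(n+1) - 6$. In Case~2 the analogous calculation uses Eq.~\eqref{eq:X:YB bound}, which carries the additional $-1$ term, producing $|a| \geq n - \log(n+1) - 5$ and a total of at least $2n - 2\log(n+1) - 7$. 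Taking the worse of the two cases gives the stated bound.

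The only mildly subtle point is justifying $I(X:Y) \leq 1$ for the Lemma~\ref{le:Kol comp} distribution, which I would just record explicitly: if the distribution is uniform on a set $R$ with $|R| \geq \tfrac{1}{2}\cdot 2^{2n}$, then $H(X) + H(Y) - H(XY) \leq 2n - \log|R| \leq 1$. Apart from this one-line observation, everything is bookkeeping on the two lemmas, so I do not anticipate any real obstacle; the main thing to be careful about is keeping the asymmetric $-1$ in the second bound of Lemma~\ref{le:Clean leak}, since that is exactly what produces the $-7$ (rather than $-6$) in the final statement.
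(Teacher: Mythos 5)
Your proposal is correct and is precisely the combination the paper intends: it applies Lemma~\ref{le:Clean leak} with the distribution supplied by Lemma~\ref{le:Kol comp}, uses $I(X:Y)\leq 1$ for a uniform distribution on at least half the input pairs, and correctly identifies that the asymmetric $-1$ in Eq.~\eqref{eq:X:YB bound} is what makes Case~2 the binding case and yields the constant $-7$. No gaps.
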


In the case of quantum protocols, a similar result holds in the entanglement assisted case. Proving this result (Appendix \ref{ap:rand ent}) makes use of the fully quantum notion of information complexity introduced recently in \cite{touchette2015quantum}. The proof follows a similar structure to the classical result: arguing that for most functions close to $n$ bits of information has to flow from Alice to Bob and for the protocol to be clean an equivalent amount of information has to be returned.
\begin{theorem} \label{th:rand ent}
Consider exactly computing a Boolean function $f_n$ on $n$-bit inputs that has been picked uniformly at random. Then with probability $1-o(1)$:
\begin{equation}
\textit{Q}^*_{\textit{clean}}\left(f_n\right)\geq n-\log n.
\end{equation}
\end{theorem}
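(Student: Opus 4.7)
The plan is to mirror the two-step structure of the classical proof of Theorem~\ref{th:rand clas}, but cast entirely in terms of the quantum information complexity $QIC$ of Touchette~\cite{touchette2015quantum}. Since $QIC(\Pi)\leq 2Q^{*}(\Pi)$ in the entanglement-assisted model (a quantitative reflection of the optimality of superdense coding), it suffices to show that for a uniformly random $f_n$, with probability $1-o(1)$, every clean entanglement-assisted protocol $\Pi$ for $f_n$ satisfies $QIC(\Pi)\geq 2n-O(\log n)$ under some input distribution. The bound $\textit{Q}^{*}_{\textit{clean}}(f_n)\geq n-\log n$ then follows immediately. The overall message to establish is that most $f_n$ require $\sim n$ bits of \emph{quantum information} to flow in one fixed direction to be computed at all, and that cleanliness forces an essentially equal amount of quantum information to flow back in the opposite direction.

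First I would establish a quantum analog of Lemma~\ref{le:Kol comp}: for a uniformly random $f_n$, with probability $1-o(1)$ there is a distribution supported on at least half the input pairs under which any (not necessarily clean) protocol exactly computing $f_n$ has one-sided information cost $QIC_{A\to B}\geq n-O(\log n)$ or, symmetrically, $QIC_{B\to A}\geq n-O(\log n)$. Quantum protocols do not literally induce a partition of the communication matrix into combinatorial rectangles, so I would replace the rectangle/Kolmogorov-complexity step of the classical proof by an entropic counting argument: fixing the one-sided $QIC$ to be small constrains the end-of-protocol reduced state on one side to live in a low-dimensional effective subspace, which in turn restricts the Boolean functions realizable in this way to a set too small to contain a random $f_n$. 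The ``short versus thin rectangle'' dichotomy of the classical argument would then manifest as a dichotomy between the two directions of information flow and would single out the hard input distribution.

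Second I would prove a quantum clean-leak lemma in the spirit of Lemma~\ref{le:Clean leak}, showing that in any clean protocol $QIC_{B\to A}\geq I(Y:XA_{\text{end}})-I(X:Y)-O(1)$ together with the symmetric statement. The argument uses the round-by-round chain-rule decomposition of $QIC$ and the fact that each player's local operation is a unitary applied to the registers she physically holds, so any information about the other party's input that Alice has acquired must be unitarily shipped back to Bob in order for her input and ancilla registers to end up in their original product form. Combining this with the first step under the hard distribution (for which $I(X:Y)\leq 1$) yields $QIC(\Pi)\geq 2n-O(\log n)$, and the theorem then follows from $QIC\leq 2Q^{*}$.

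I expect the first step to be the main obstacle. In the classical proof the rectangle/Kolmogorov argument is what forces a \emph{direction} into the information flow, and this is tightly tied to the combinatorial fact that a deterministic classical protocol partitions the inputs into monochromatic rectangles. The analogous quantum statement is murkier: one must show that for a typical $f_n$, the structure of the protocol's two end-of-protocol reduced states is constrained enough to force high $QIC$ in at least one direction, and one must do so while losing only $O(\log n)$ rather than $O(\sqrt{n})$ or $O(\log^{2}n)$. My expected route is to bypass rectangles altogether and work directly with $I(X:YB_{\text{end}})$ and $I(Y:XA_{\text{end}})$, combining a counting bound on the number of Boolean functions computable by low-$QIC$ protocols with the per-round $QIC$ chain rule so that the first-step bound plugs cleanly into the clean-leak lemma.
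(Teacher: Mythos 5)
Your two-part skeleton---roughly $n$ bits of quantum information must flow from Alice to Bob to compute a random $f_n$, and cleanliness forces an equal amount to flow back---is exactly the shape of the paper's argument, and your second step corresponds to what the paper proves (more simply and exactly) as Corollary~\ref{co:clean}: for a clean protocol on classical product inputs the reduced entropies of Alice's and Bob's registers are unchanged by the protocol, so Lemma~\ref{le:inf flow} gives $\textit{QLA}=\textit{QLB}$ with no $O(1)$ loss and no round-by-round chain rule needed. The genuine gap is in your first step, which you yourself flag as the main obstacle: you propose to lower bound the one-sided information cost via ``a counting bound on the number of Boolean functions computable by low-$QIC$ protocols,'' but you supply no mechanism for this, and a literal count is problematic because entanglement-assisted quantum protocols form a continuum---there is no finite set of transcripts to enumerate and no discretization or net argument is given. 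The rectangle/Kolmogorov dichotomy has no clean quantum analogue, so that part of the classical proof cannot be transplanted, and without a concrete replacement the argument does not close.

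The paper's resolution is a specific device you are missing. Run the protocol coherently on the input state of Eq.~\eqref{eq:class-sup}, in which Bob's input is a uniform superposition entangled with a local copy register $B'$ and the answer register is initialized to $\ket{-}$. Cleanliness then forces the final state on $BB'$, given Alice's input $x$, to be $\ket{\psi_x}=2^{-n/2}\sum_y(-1)^{f_n(x,y)}\ket{y}\ket{y}$, so $I\left(B_{\textrm{OUT}}:R\right)=S\left(\rho_{BB'}\right)$ with $\rho_{BB'}=2^{-n}\sum_x\ketbra{\psi_x}{\psi_x}$, and Lemma~\ref{le:inf flow} yields $\textit{QLA}\geq S\left(\rho_{BB'}\right)$. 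The randomness of $f_n$ enters only through the concentration result of \cite[Theorem IV.1]{montanaro2007lower}, which gives $S\left(\rho_{BB'}\right)\geq\left(1-\delta\right)n$ except with probability $e^{-\left(2^{\delta n}-1\right)^2/2}$; setting $\delta=\log n/n$ produces the $n-\log n$. A further ingredient you omit is Lemma~\ref{le:cs vs ss} (a special case of \cite{kerenidis2014privacy}), needed to transfer the leakage bound from the superposed-input state back to the classical-input state on which Corollary~\ref{co:clean} applies. Finally, mind the normalization: with the paper's convention $\textit{QIC}=\frac{1}{2}\left(\textit{QLA}+\textit{QLB}\right)$ one has $\textit{QIC}\leq Q^{*}$ and the conclusion $Q^{*}\geq n-\log n$ is immediate; your ``$QIC\leq 2Q^{*}$'' uses the other convention and, as sketched, would only deliver $n-O\left(\log n\right)$.
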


\emph{Conclusion.} In this paper we have initiated the study of how big an overhead in communication cost cleanliness requires. For the Inner Product function (and the task of implementing $n$ \emph{CZ} gates in parallel) we have exhibited quantum and classical protocols for which the overhead is low. For most Boolean functions however, we have shown that the additional cost incurred by demanding cleanliness is close to maximal for the classical and entanglement assisted complexities. Many questions remain.

For example, what are the clean, classical and quantum complexity of other notable functions such as Equality and Disjointness? More generally, note that any Boolean function on inputs of length $n$ can be written in the form:
\begin{equation}
f\left(x,y\right)=\sum_{i=1}^{k} P_i\left(x\right)\cdot Q_i\left(y\right),
\end{equation}
where $\left\{P_i\right\}_{i=1}^{k}$ and $\left\{Q_i\right\}_{i=1}^{k}$ are sets of Boolean functions and $k$ is at most $2^n$ \cite{van1999nonlocality}. It follows that for those functions for which such a decomposition exists with small enough $k$, the protocols used in Theorems \ref{th:clean quant IP} and \ref{th:clean class IP} can be used to give non-trivial upper bounds on the quantum and classical clean communication complexities respectively. In particular, this holds for $k<2n-3$ in the quantum case and $k<2n-4\sqrt{2n}+4$ in the classical setting.

As Theorems \ref{th:rand clas} and \ref{th:rand ent} show that the clean, classical and entanglement assisted communication complexity for most functions on $n$ bit inputs is close to maximal, one can ask: does something similar hold for $Q_{\textit{clean}}\left(f\right)$? We leave this as an open question but conjecture it to be close to $2n$ as Inner Product appears somewhat special in its ability to reuse a single ebit efficiently. However, the concept of information cost is somewhat blind to the sending of ebits so the technique used for the entanglement assisted case does not immediately generalize to proving a bound potentially larger than $n$.

\begin{acknowledgments}
We thank Aram Harrow for helpful discussions and for bringing \cite{harrow2004coherent} to our attention. HB was partially funded by the European Commission, through the SIQS project and by the Netherlands Organisation for Scientific Research (NWO) through gravitation grant Networks. MC and CP acknowledge financial support from the European Research Council (ERC Grant Agreement no 337603), the Danish Council for Independent Research (Sapere Aude), VILLUM FONDEN via the QMATH Centre of Excellence (Grant No. 10059) and the Swiss National Science Foundation (project no PP00P2\_150734). JZ is supported by NWO through the research programme 617.023.116 and by the European Commission through the SIQS project.
\end{acknowledgments}

\bibliography{References}
\bibliographystyle{apsrev4-1}

\clearpage
\widetext
\appendix

\section{Preliminaries}

In this Appendix we provide background materials from information theory, communication complexity, the study of Kolmogorov complexity and the concept of information complexity that have been used to prove our results.

\subsection{Information theory}

For a more thorough introduction to the quantities discussed here, see, for example, \cite{wilde2013quantum,nielsen2010quantum}.

\subsubsection{Classical}

To prove Lemmas~\ref{le:Kol comp} and \ref{le:Clean leak}, we need to define the classical mutual information and its conditional analogue. To do this, we first define the following quantities:
\begin{definition}{\textbf{Shannon Entropy.}}
\begin{itemize}
\item Given a random variable $X$, its Shannon entropy is defined by:
\begin{equation}
H\left(X\right)=-\sum_x p\left(x\right)\log p\left(x\right).
\end{equation}
If $X$ has support on $n$ elements, then $H\left(X\right)\leq \log n$.
\item For two random variables $X$ and $Y$, the entropy of $X$ conditioned on knowing $Y$ (the conditional entropy of $X$ given $Y$) is given by:
\begin{equation}
H\left(X|Y\right)=\sum_y p\left(y\right) H\left(X|Y=y\right).
\end{equation}
\end{itemize}
\end{definition}

With these in place, the mutual information is defined as follows:
\begin{definition}{\textbf{Classical mutual information.}}
\begin{itemize}
\item The mutual information between two random variables $X$ and $Y$ is given by:
\begin{equation}
I\left(X:Y\right)=H\left(X\right)-H\left(X|Y\right).
\end{equation}
\item The mutual information between two random variables $X$ and $Y$ conditioned on knowing a third random variable $Z$ (the conditional mutual information) is given by:
\begin{equation}
I\left(X:Y|Z\right)=H\left(X|Z\right)-H\left(X|YZ\right).
\end{equation} 
\end{itemize}
Note that:
\begin{align*}
I\left(X:YZW\right)&=H\left(X\right)-H\left(X|YZW\right)\\
&=H\left(X\right)-H\left(X|YZ\right)+H\left(X|YZ\right)-H\left(X|YZW\right)\\
&=I\left(X:YZ\right)+I\left(X:W|YZ\right).
\end{align*}
\end{definition}

\subsubsection{Quantum}

To define the concept of information complexity and prove Theorem~\ref{th:rand ent} we will also require their quantum analogues:
\begin{definition}{\textbf{von Neumann entropy.}}
Given a quantum state $\rho$ belonging to a Hilbert space $\mathcal{H}$, its von Neumann entropy is defined by:
\begin{equation}
S\left(\rho\right)=-\tr\left[\rho\log\rho\right].
\end{equation}
If $\rho$ is the maximally mixed state on a Hilbert space of dimension $n$, then $S\left(\rho\right)=\log n$.
\end{definition}

\begin{definition}{\textbf{Quantum mutual information.}}
\begin{itemize}
\item Given a composite quantum state $\rho_{AB}$ on a product Hilbert space $\mathcal{H}_{AB}=\mathcal{H}_A\otimes\mathcal{H}_B$, the quantum mutual information between the two components $A$ and $B$ is given by:
\begin{equation}
I\left(A:B\right)=S\left(\rho_A\right)+S\left(\rho_B\right)-S\left(\rho_{AB}\right),
\end{equation}
with the reduced density matrices $\rho_A$ and $\rho_B$ defined by $\rho_A=\tr_B\left[\rho_{AB}\right]$ and $\rho_B=\tr_A\left[\rho_{AB}\right]$.

\item Given a composite quantum state $\rho_{ABC}$ on a product Hilbert space $\mathcal{H}_{ABC}=\mathcal{H}_A\otimes\mathcal{H}_B\otimes\mathcal{H}_C$, the quantum mutual information between two components $A$ and $B$ conditioned on the third component $C$ is given by:
\begin{equation}
I\left(A:B|C\right)=S\left(\rho_{AC}\right)+S\left(\rho_{BC}\right)-S\left(\rho_C\right)-S\left(\rho_{ABC}\right),
\end{equation}
with the reduced density matrices defined in a similar fashion to the above.
\end{itemize}
\end{definition}

\subsection{Communication complexity} \label{ap:com comp}

For a comprehensive introduction to the field of communication complexity, see \cite{kushilevitz1997communication}.

To prove Lemma~\ref{le:Kol comp} we will need the following basic concepts from the theory of communication complexity:
\begin{definition}{\textbf{Communication matrix.}}
Given a Boolean function $f:\left\{0,1\right\}^n\times\left\{0,1\right\}^n\rightarrow\left\{0,1\right\}$, the associated communication matrix $M^f$ is a $2^n\times2^n$ matrix such that:
\begin{equation}
M^f_{x,y}=f\left(x,y\right).
\end{equation}
\end{definition}
\begin{definition}{\textbf{Monochromatic rectangle}}
Given two sets $\mathcal{X}$ and $\mathcal{Y}$, a rectangle is a set $\mathcal{R}=\mathcal{S}\times\mathcal{T}$ where $\mathcal{S}\subseteq\mathcal{X}$ and $\mathcal{T}\subseteq\mathcal{Y}$. Given a function $f$ with domain $\mathcal{X}\times\mathcal{Y}$, a rectangle $\mathcal{R}$ is said to be $f$-monochromatic (shortened to monochromatic) if there exists a constant $z$ such that $f\left(x,y\right)=z$ for all $\left(x,y\right)\in\mathcal{R}$.
\end{definition}
Part of the relevance of rectangles to communication tasks is captured in the following lemma:
\begin{lemma}
Any classical protocol for computing a function $f$ (such that both players learn the answer), partitions the communication matrix $M^f$ into monochromatic rectangles. 
\end{lemma}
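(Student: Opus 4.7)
The plan is to analyze the structure of an arbitrary deterministic classical protocol $\Pi$ for $f$ and show that the fibers of its transcript map form the desired partition. For each input pair $(x,y)$, let $\pi(x,y)$ denote the full transcript of messages exchanged when Alice and Bob run $\Pi$ on inputs $x$ and $y$. Because $\Pi$ is deterministic, $\pi$ is a well-defined function of $(x,y)$, and distinct transcripts have disjoint preimages, so the sets $\mathcal{R}_t = \{(x,y) : \pi(x,y)=t\}$ partition $\mathcal{X}\times\mathcal{Y}$ as $t$ ranges over the transcripts actually produced.

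Next I would show that every $\mathcal{R}_t$ is a rectangle by induction on the number of messages sent. For the inductive step, suppose the set of inputs consistent with some prefix $t_{<i}$ is a rectangle $S\times T$. The $i$th message is, by the rules of the protocol, a deterministic function of one player's input together with $t_{<i}$; if it is Alice's turn, the message depends only on $x$ and $t_{<i}$, so the set of $x$'s in $S$ producing the next bit forms some subset $S'\subseteq S$ while the set of consistent $y$'s remains $T$. Hence inputs consistent with $t_{<i+1}$ form $S'\times T$, and symmetrically when Bob speaks. Taking $i$ equal to the length of $t$ gives that $\mathcal{R}_t$ is a rectangle.

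Finally, since both players are required to learn $f(x,y)$ at termination and a deterministic party's output is a function only of its own input and the transcript, if $(x,y),(x',y')\in\mathcal{R}_t$ then Bob, on input $y$ and transcript $t$, outputs the same value for both pairs; applied with $x'=x$ and then the rectangle property, this forces $f(x,y)=f(x',y')$, so $\mathcal{R}_t$ is monochromatic. Combining, the collection $\{\mathcal{R}_t\}_t$ is a partition of $M^f$ into monochromatic rectangles.

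The main obstacle is being careful about exactly which variables each message depends on; once one fixes the convention that at round $i$ the speaker's outgoing bit is a function of their input and the prefix $t_{<i}$ alone (and not of the listener's input), the rectangle property falls out immediately by induction. Nothing more delicate than this bookkeeping is required, and no quantitative estimate is involved.
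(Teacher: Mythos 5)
Your proof is correct and is essentially the argument the paper relies on: the paper does not prove this lemma itself but defers to \cite[Lemma 1.16]{kushilevitz1997communication}, whose proof is the same transcript-fiber construction with the rectangle property established by induction on the rounds. One small point of precision: in the final monochromaticity step you should explicitly invoke Alice's output (a function of $x'$ and $t$ alone) to conclude $f(x',y)=f(x',y')$ after Bob's output gives $f(x,y)=f(x',y)$ for the intermediate point $(x',y)\in\mathcal{R}_t$ supplied by the rectangle property --- this is exactly where the hypothesis that \emph{both} players learn the answer is used, and without it one only gets the ``monochromatically striped'' rectangles that the paper works with elsewhere.
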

\begin{proof}
See, for example, \cite[Lemma 1.16]{kushilevitz1997communication}.
\end{proof}

\subsection{Kolmogorov complexity}

The proof of Lemma~\ref{le:Kol comp} will also make use of the concept of Kolmogorov complexity.  A concise introduction to the topic can be found in \cite{trevisan2015notes}. For more detail see \cite{li2013introduction} or \cite[Chapter 14]{cover2012elements}.
\begin{definition}{\textbf{Kolmogorov complexity (informal).}}
Given a universal computer $\mathcal{U}$, the Kolmogorov complexity of an $n$-bit string $s$ with respect to $\mathcal{U}$, $K_{\mathcal{U}}\left(s\right)$, is defined to be the length of the shortest program that when implemented on $\mathcal{U}$ prints $s$ and then halts.

The conditional Kolmogorov complexity of $s$ given knowledge of $n$, $K_{\mathcal{U}}\left(s|n\right)$ is the shortest program  length when $\mathcal{U}$ has the value of $n$ made freely available to it.
\end{definition}
The choice of universal computer impacts upon the Kolmogorov complexity by at most an additive constant and hence we shall drop the subscript $\mathcal{U}$ and take the Kolmogorov complexity to be defined with respect to some fixed universal computer. More formal definitions and additional details can be found in the references.

In what follows, we shall make use of the following lemma:
\begin{lemma} \label{le:rand Kol}
For every $n$ and every $c$, the probability that a $n$-bit string $s$, chosen uniformly at random, is such that:
\begin{equation}
K\left(s|n\right)\geq n-c,
\end{equation}
is greater than $1-2^{-c}$. 
\end{lemma}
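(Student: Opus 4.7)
The plan is a standard counting/pigeonhole argument that bounds how many strings can be compressed below a given threshold. The key observation is that short programs are a scarce resource: there are only finitely many of them, and each produces at most one output when run on the universal machine (with $n$ given as side information). So the number of strings $s$ of length $n$ admitting a description shorter than $n-c$ is itself bounded by the number of such descriptions.

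Concretely, I would proceed as follows. First, count programs of length strictly less than $n-c$: the total number is
\begin{equation*}
\sum_{k=0}^{n-c-1} 2^k \;=\; 2^{n-c}-1.
\end{equation*}
Second, observe that on the fixed universal computer $\mathcal{U}$ (with $n$ available as side information), each program either halts with a unique output or fails to halt; in particular, the map from halting programs to their outputs is well-defined. Therefore the set
\begin{equation*}
B_c \;=\; \{\, s\in\{0,1\}^n : K(s\mid n) < n-c \,\}
\end{equation*}
has cardinality at most $2^{n-c}-1$, since every element of $B_c$ must be the output of some program of length $<n-c$.

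Third, divide by the total number $2^n$ of $n$-bit strings to obtain
\begin{equation*}
\Pr_{s}\!\bigl[K(s\mid n) < n-c\bigr] \;\leq\; \frac{2^{n-c}-1}{2^n} \;<\; 2^{-c},
\end{equation*}
so taking complements yields $\Pr[K(s\mid n)\geq n-c] > 1-2^{-c}$, as required.

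There is essentially no obstacle here beyond care with the counting; the main subtlety is being explicit that the bound is strict (which is why the lemma asserts \emph{strictly greater than} $1-2^{-c}$), arising because the geometric sum stops one short of $2^{n-c}$. The universal-machine constant that normally clutters Kolmogorov-complexity statements does not intervene, since we are simply counting programs on the fixed reference machine $\mathcal{U}$ used to define $K$.
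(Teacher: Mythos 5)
Your counting argument is correct and is exactly the standard proof; the paper itself only cites references (Trevisan's notes, Cover--Thomas Theorem 14.5.1) for this fact, and those references prove it by the same pigeonhole count of programs of length less than $n-c$. No issues.
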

\begin{proof}
See, for example, \cite[Fact 2]{trevisan2015notes} or \cite[Theorem 14.5.1]{cover2012elements}.
\end{proof}

\subsection{Quantum information cost} \label{ap:inf comp}

While the communication complexity measures the number of physical bits or qubits that Alice and Bob exchange during the course of a protocol, the information cost seeks to capture the amount of information the players reveal regarding their inputs. As such, it will depend on the distribution that the players' inputs are drawn according to (to see this note that if Alice sends her entire input to Bob, then if their inputs are perfectly correlated, Alice's message reveals nothing to Bob. Alternatively, if the inputs are not perfectly correlated then Alice will send some information). The information cost of an entanglement assisted, quantum protocol was defined recently in \cite{touchette2015quantum} to be:
\begin{definition}{\textbf{Quantum information cost.}}
Let $\Pi$ be a quantum protocol applied to an input state $\rho_{A_{\textrm{IN}}B_{\textrm{IN}}}$. Let $\ket{\Psi}_{A_{\textrm{IN}}B_{\textrm{IN}}R}$ be the purification of $\rho_{A_{\textrm{IN}}B_{\textrm{IN}}}$. The quantum information cost of $\Pi$ applied to $\rho_{A_{\textrm{IN}}B_{\textrm{IN}}}$ is given by:
\begin{equation}
\textit{QIC}\left(\Pi,\ket{\Psi}_{A_{\textrm{IN}}B_{\textrm{IN}}R}\right)=\frac{1}{2}\sum_{i\textrm{ odd}} I\left(C_i:R|B_{i-1}\right) +\frac{1}{2} \sum_{i\textrm{ even}} I\left(C_i:R|A_{i-1}\right),
\end{equation}
where the systems $A_i$, $B_i$, $C_i$ and $R$ are defined in Figure \ref{fig:InfCost}.

\begin{figure}
\centering
\includegraphics[width=.9\linewidth]{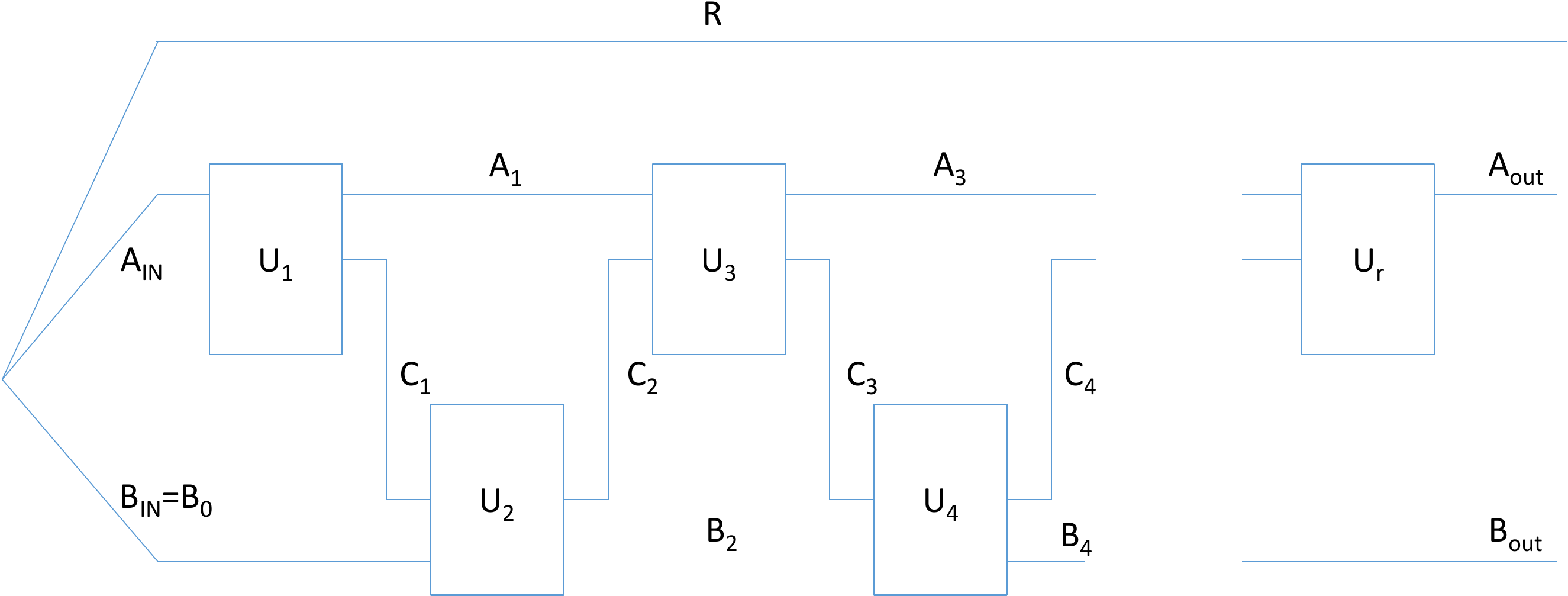}
\caption{\emph{Schematic diagram of the systems involved in a quantum communication protocol.} In each round of the protocol a player applies a unitary to the qubits in their possession. This determines the state of the message system they send to the other player and updates their local state-space. The number of qubits exchanged during the protocol is $\sum_{i}\left\lceil\log\textrm{dim}\left(C_i\right)\right\rceil$. The system $R$ holds the purification of the players' input state.}
\label{fig:InfCost}
\end{figure}

The information that leaks from Alice to Bob during the protocol shall be denoted $\textit{QLA}\left(\Pi,\ket{\Psi}_{A_{\textrm{IN}}B_{\textrm{IN}}R}\right)$ and is given by:
\begin{equation}
\textit{QLA}\left(\Pi,\ket{\Psi}_{A_{\textrm{IN}}B_{\textrm{IN}}R}\right)=\sum_{i\textrm{ odd}} I\left(C_i:R|B_{i-1}\right).
\end{equation}
The information that leaks from Bob to Alice during the protocol can be defined in a similar way.
\end{definition}
Note that as $\log\textrm{dim}\left(C_i\right)\geq\frac{1}{2}I\left(C_i:R|B_{i-1}\right)$ for odd $i$ and $\log\textrm{dim}\left(C_i\right)\geq\frac{1}{2}I\left(C_i:R|A_{i-1}\right)$ for even $i$, the information cost of a protocol on any input state provides a lower bound on the number of qubits exchanged during the protocol \cite{touchette2015quantum}.

\newpage

\section{Clean protocols for Inner Product}

In this Appendix we give the full, explicit protocols for cleanly computing: $\textit{IP}_n$ in the phase using $n+2$ qubits and without ancilla qubits, $\textit{IP}_n$ using $n+3$ qubits (Theorem \ref{th:clean quant IP}), $\textit{IP}_n$ using $n+1$ bits and 2 qubits and $\textit{IP}_n$ using $n+O\left(\sqrt{n}\right)$ bits (Theorem \ref{th:clean class IP}).

\subsection{Quantum protocols for computing Inner Product in the phase.} \label{ap:IP in phase}

Here we show how to compute $\textit{IP}_n$ in the phase, cleanly and without prior entanglement, and without the need for any ancilla qubits.

\begin{lemma} \label{le:IP in phase2}
The clean, quantum communication complexity of exactly implementing $\textit{IP}_n$ in the phase without ancilla qubits satisfies:
\begin{equation}
\textit{Q}_{\textit{clean}}\left(\textit{IP}_{n}^{\textit{phase}}\right)\leq \begin{cases}
n+1 & \text{for }n \text{ odd},\\
n+2 & \text{for }n \text{ even}.
\end{cases}
\end{equation}
\end{lemma}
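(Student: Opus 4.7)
The plan is to modify the $(n+1)$-qubit ancilla-based protocol of Lemma~\ref{le:IP in phase} by replacing the ancilla with one of the players' own input qubits. For odd $n$, I would use Bob's $B_1$ (initially in state $\ket{y_1}$) as the ``carrier'' that is sent back and forth $n+1$ times. Since $n+1$ is even when $n$ is odd, the carrier starts and ends on Bob's side, so it can be restored to its owner without extra cost. At each visit, the player in possession applies a carefully chosen set of $\textit{CZ}$ gates between the carrier and a subset of their input qubits, and uses $\textit{CNOT}$ operations to XOR a new input bit into the carrier's accumulated linear combination; the final $(n-1)/2$ visits on each side reverse the $\textit{CNOT}$s, restoring $B_1$ to $\ket{y_1}$.

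The $\textit{CZ}$ gates produce the desired phases $(-1)^{x_i y_i}$ along with three types of unwanted contributions: terms of the form $x_i x_j$ (Alice-local), $y_i y_j$ (Bob-local), and $x_i y_j$ with $i \neq j$ (cross terms). The first two types can be cancelled by local pre-processing $\textit{CZ}$ gates applied by each player on their own registers (for instance, Alice applies $\textit{CZ}(A_{2k}, A_{2k+1})$ before the protocol begins, and Bob applies $\textit{CZ}(B_{2k-1}, B_{2k})$). The cross terms $x_i y_j$ are handled by choosing the $\textit{CZ}$ pattern at each visit so that every such term is produced an even number of times across all visits, and thus cancels modulo~2. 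The crucial feature is that every $\textit{CNOT}$ that modifies the carrier uses an input qubit as the control, so the input qubits themselves remain undisturbed throughout.

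For even $n$, $n+1$ crossings would leave the carrier on the wrong side. The simplest fix is to run the odd-$(n-1)$ protocol to compute $\sum_{i=1}^{n-1} x_i y_i$ in the phase using $n$ crossings, and then handle the remaining term $x_n y_n$ directly with two extra crossings: Alice sends $A_n$ to Bob, Bob applies $\textit{CZ}(A_n, B_n)$, and returns $A_n$. This gives $n+2$ crossings in total and avoids introducing any new ancillas.

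The main obstacle is verifying that the pattern of $\textit{CZ}$ gates leads to cancellation of all cross terms for arbitrary odd $n$. I would first verify the construction explicitly for small cases such as $n=3$ and $n=5$, from which a systematic pattern emerges: at the $k$-th visit on each side, the $\textit{CZ}$ gates applied are between the carrier and the ``neighbouring'' same-side input qubits (the one just XORed in and, where it exists, the next one to be XORed in), so that every cross term $x_i y_j$ is generated exactly twice by the interactions before and after an XOR step. A direct linear-algebraic accounting then confirms cancellation in general.
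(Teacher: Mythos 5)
Your overall strategy is the same as the paper's: promote one of the players' own input qubits to the role of the travelling ancilla, shuttle it across $n+1$ (resp.\ $n+2$) times, use \emph{CNOT}s controlled by input qubits to load and unload bits from the carrier, use \emph{CZ} gates against the carrier to deposit phases, cancel the carrier-owner's self-interaction terms by local pre-processing, and arrange for the cross terms to cancel in pairs. Your parity bookkeeping is correct, and your treatment of even $n$ (run the odd case on the first $n-1$ coordinates, then spend two extra crossings on $x_n y_n$) is a legitimate, slightly different packaging of the paper's even-$n$ protocol.

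The gap is that the entire technical content of the lemma --- the explicit gate schedule and the verification that every unwanted term cancels while every $x_iy_i$ appears exactly once --- is precisely the step you defer to ``verify small cases, then do a linear-algebraic accounting.'' Nothing you have written pins down a schedule, so the proof is a plan rather than an argument, and the specific claims you do make are not obviously consistent with one. In the paper's protocol the carrier always holds $x_1$ XORed with at most one other bit, so a \emph{CZ} against a qubit $q$ deposits the wanted product plus a single spurious term $x_1\cdot q$; the spurious Alice-local terms $x_1x_i$ ($i$ even) are pre-compensated by \emph{CZ}s between the carrier and those same qubits (not between adjacent pairs $A_{2k},A_{2k+1}$ as you propose), and the spurious cross terms $x_1y_i$ ($i$ odd) are all deposited in one batch on the carrier's first visit to Bob and then cancelled one at a time in later rounds --- not ``generated exactly twice by the interactions before and after an XOR step.'' Likewise your statement that ``the final $(n-1)/2$ visits on each side reverse the \emph{CNOT}s'' suggests a carrier that accumulates a long linear combination before unwinding, which is a different (and unverified) invariant from the one that makes the cancellation work. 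Until you write down the schedule and check the phase ledger, the upper bound is not established.
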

\begin{proof}
Suppose Alice starts with the state $\ket{x}=\ket{x_1}\ket{x_2}\dots\ket{x_n}$ in qubits labeled $A_1,\dots,A_n$ and Bob starts with $\ket{y}=\ket{y_1}\ket{y_2}\dots\ket{y_n}$ in qubits labeled $B_1,\dots,B_n$. For simplicity of exposition, we shall assume $n$ is even.
\begin{enumerate}
\item For each even $i$, Alice applies a \emph{CZ} gate between $A_1$ and $A_i$. This applies a global phase of $\left(-1\right)^{x_1\cdot\sum_{i:\textrm{even}}x_i}$. Alice sends qubit $A_1$ to Bob.
\item For each odd $i$, Bob applies a \emph{CZ} between $A_1$ and $B_i$. This applies a global phase of $\left(-1\right)^{x_1\cdot\sum_{i:\textrm{odd}}y_i}$.
\item Bob performs a \emph{CNOT} gate on $A_1$ using $B_2$ as the control qubit. This leaves $A_1$ in the state $\ket{x_1\oplus y_2}$. Bob sends $A_1$ to Alice.
\item In round $j$ of the protocol ($2\leq j \leq \frac{n}{2}$):
\begin{enumerate}
\item Alice performs a \emph{CZ} between $A_1$ and $A_{\left(2j-2\right)}$. This applies a global phase of $\left(-1\right)^{x_1\cdot x_{\left(2j-2\right)}\oplus y_{\left(2j-2\right)}\cdot x_{\left(2j-2\right)}}$.
\item Alice performs a \emph{CNOT} on $A_1$ using $A_{\left(2j-1\right)}$ as the control qubit. This leaves $A_1$ in the state $\ket{x_1\oplus y_{\left(2j-2\right)}\oplus x_{\left(2j-1\right)}}$. She sends $A_1$ to Bob.
\item Bob performs a \emph{CNOT} on $A_1$ using $B_{\left(2j-2\right)}$ as the control qubit. This leaves $A_1$ in the state $\ket{x_1\oplus x_{\left(2j-1\right)}}$.
\item Bob performs a \emph{CZ} between $A_1$ and $B_{\left(2j-1\right)}$. This applies a global phase of $\left(-1\right)^{x_1\cdot y_{\left(2j-1\right)} \oplus x_{\left(2j-1\right)}\cdot y_{\left(2j-1\right)}}$.
\item Bob performs a \emph{CNOT} on $A_1$ using $B_{2j}$ as the control qubit. This leaves $A_1$ in the state $\ket{x_1\oplus x_{\left(2j-1\right)} \oplus y_{2j}}$. Bob sends $A_1$ to Alice.
\item Alice performs a \emph{CNOT} on $A_1$ using $A_{\left(2j-1\right)}$ as the control qubit. This leaves $A_1$ in the state $\ket{x_1\oplus y_{2j}}$.
\end{enumerate}
\item After round $n/2$, Alice performs a \emph{CZ} between $A_1$ and $A_{n}$. This applies a global phase of $\left(-1\right)^{x_1\cdot x_{n} \oplus y_{n}\cdot x_{n}}$. The overall global phase after this step is now $\left(-1\right)^{x\cdot y}$. Alice then sends $A_1$ back to Bob.
\item Bob applies a \emph{CNOT} on $A_1$ using $B_{n}$ as the control qubit. This leaves $A_1$ in the state $\ket{x_1}$ which he sends back to Alice, completing the protocol.
\end{enumerate}
In total, this protocol sends $n+2$ qubits. For odd $n$ only $n+1$ qubits of communication are required. Here Bob applies the final operation to the global phase and sends $A_1$ in the state $\ket{x_1\oplus x_n}$ to Alice who converts this back into $\ket{x_1}$.
\end{proof}

The above protocol and its counterpart from Lemma~\ref{le:IP in phase}, have a surprising twist to them. If it is applied to a uniform superposition of inputs on Alice's side, it results in Bob's input being sent to Alice:
\begin{equation}
\frac{1}{\sqrt{2^n}}\sum_{x\in\left\{0,1\right\}^n} \ket{x}\ket{y}\stackrel{\textrm{protocol}}{\longmapsto} \ket{y}\ket{y},
\end{equation}
while if Bob inputs a uniform superposition, he obtains Alice's input. At first glance, this seems counter-intuitive: the protocol is capable of sending $n$ bits from either Alice to Bob \emph{or} Bob to Alice with only $n+1$ qubits of communication in total and without the aid of pre-shared entanglement.

Closer examination reveals that this effect is due to superdense coding -- to which the protocol reduces in this case. If Alice runs the protocol in superposition, then in the first step she sends half of an ebit to Bob. Bob's first operations then correspond to encoding the value of $y_1$ and $y_2$ in this ebit as per superdense coding. Once he sends this back to Alice, her next steps correspond to the decoding operation and, if she applies Hadamards to her first two registers, she obtains the value of $y_1$ and $y_2$. The protocol then repeats these steps, resulting in $y$ being transferred to Alice. A similar reduction to superdense coding occurs if Bob runs the protocol in superposition instead.

\subsection{Clean, quantum protocol for Inner Product} \label{ap:clean quant IP}

Here we give the full proof of Theorem \ref{th:clean quant IP}.

\begin{duplicate*}[\textbf{Restatement of Theorem \ref{th:clean quant IP}}]
The clean, quantum communication complexity of exactly computing $\textit{IP}_n$ satisfies:
\begin{equation}
n+1\leq \textit{Q}_{\textit{clean}}\left(\textit{IP}_n\right)\leq \begin{cases}
n+3 & \text{for }n \text{ odd},\\
n+2 & \text{for }n \text{ even}.
\end{cases}
\end{equation}
No ancillary registers are required.
\end{duplicate*}
\begin{proof}
The clean, quantum protocol for achieving the upper bound runs as follows. For simplicity, we assume that $n$ is even:
\begin{enumerate}
\item Initially Alice takes the input registers $A_1$ and $A_2$ containing $\ket{x_1}$ and $\ket{x_2}$ respectively. She sends $A_1$ and $A_2$ to Bob.
\item Bob uses $A_1$ and $A_2$ to cleanly compute the value of $x_1\cdot\sum_{i\textrm{ odd}}y_i+x_2\cdot\sum_{i\textrm{ even}}y_i$ and stores the result in the answer register. This can be done without using any ancillas.
\item Bob applies a Hadamard gate to $A_1$ followed by a \emph{CNOT} to $A_1A_2$ using $A_1$ as the control qubit. This results in the state $\frac{1}{\sqrt{2}}\left(\ket{0}_{A_1}\ket{x_2}_{A_2}+\left(-1\right)^{x_1}\ket{1}_{A_1}\ket{\bar{x}_2}_{A_2}\right)$. Bob sends $A_1$ to Alice.
\item In round $j$ of the protocol ($2\leq j\leq \frac{n}{2}$):
\begin{enumerate}
\item Alice performs a \emph{CZ} gate between $A_1$ and $A_{2j-1}$ followed by a \emph{CNOT} gate to $A_1$ and $A_{2j}$ using $A_{2j}$ as the control qubit. This results in the state $\frac{1}{\sqrt{2}}\left(\ket{x_{2j}}_{A_1}\ket{x_2}_{A_2}+\left(-1\right)^{x_{2j-1}+x_1}\ket{\bar{x}_{2j}}_{A_1}\ket{\bar{x}_2}_{A_2}\right)$. She sends $A_1$ to Bob.
\item Bob performs a \emph{CNOT} on $A_1$ using $A_2$ as the control qubit. He then applies a Hadamard gate to $A_2$. This results in the state $\ket{x_{2j}\oplus x_2}_{A_1}\ket{x_{2j-1}\oplus x_1}_{A_2}$.
\item Bob cleanly computes the inner product between $A_1$ and $B_{2j}$, and $A_2$ and $B_{2j-1}$, storing the result by \emph{XORing} onto the answer register. 
\item Bob applies a Hadamard to $A_2$ before using it as a control qubit to perform a \emph{CNOT} on $A_1$ to recreate the state $\frac{1}{\sqrt{2}}\left(\ket{x_{2j}}_{A_1}\ket{x_2}_{A_2}+\left(-1\right)^{x_{2j-1}+x_1}\ket{\bar{x}_{2j}}_{A_1}\ket{\bar{x}_2}_{A_2}\right)$. He sends $A_1$ back to Alice.
\item Alice performs a \emph{CNOT} gate to $A_1$ and $A_{2j}$ using $A_{2j}$ as the control qubit followed by a \emph{CZ} gate between $A_1$ and $A_{2j-1}$. This recreates the state $\frac{1}{\sqrt{2}}\left(\ket{0}_{A_1}\ket{x_2}_{A_2}+\left(-1\right)^{x_1}\ket{1}_{A_1}\ket{\bar{x}_2}_{A_2}\right)$.
\end{enumerate}
\item After round $n/2$, the answer register holds the value of $x\cdot y$. To complete the protocol, Bob sends $A_2$ back to Alice who performs a \emph{CNOT} gate on $A_1 A_2$ using $A_1$ as the control qubit followed by a Hadamard to $A_1$. This restores $A_1$ and $A_2$ to the state $\ket{x_1}_{A_1}\ket{x_2}_{A_2}$.
\end{enumerate}
In total this protocol sends $n+2$ qubits and requires no ancillas. For odd $n$, $\textit{IP}_n$ can be calculated by running the above protocol on the first $n-1$ input registers before Alice sends $x_n$ to Bob who computes $x_n\cdot y_n$ and sends $x_n$ back to Alice to complete the protocol. Thus $n+3$ qubits of communication and no ancillas are required for odd $n$.

The lower bound follows either from the lower bound in Lemma~\ref{le:IP in phase} or more simply from a result in \cite{buhrman2001communication}. There it was shown that:
\begin{equation}
Q_{\textit{clean}}\left(f\right)\geq\log\textrm{rank}\left(M^f\right)+1.
\end{equation}
As $\textrm{rank}\left(M^{\textit{IP}_n}\right)\geq 2^n-1$ \cite[Example 1.29]{kushilevitz1997communication}, the result follows.
\end{proof}

\subsection{Clean, 2 qubit, $n+1$ bit protocol for Inner Product} \label{ap:clean mix IP}

Here we give a protocol for cleanly computing the Inner Product function based upon the quantum protocol for computing $\textit{IP}_n$ in the phase. It requires only 2 qubits and $n+1$ bits of communication.
\begin{proposition}
$\textit{IP}_n$ can be computed cleanly using:
\begin{equation*}
2\textrm{ qubits and }n+1 \text{ bits}.
\end{equation*}
One ancillary qubit and one ancillary bit is required.
\end{proposition}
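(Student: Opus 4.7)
The plan is to adapt the clean quantum protocol of Lemma~\ref{le:IP in phase} that computes $\textit{IP}_n$ in the phase. That protocol uses $n+1$ qubits of communication to apply a global phase $(-1)^{x \cdot y}$ via alternating \emph{CZ} gates between a single traveling qubit and local input qubits. The aim here is to mimic the same phase accumulation while spending only two qubits of quantum communication on an ancillary qubit $Q$ that will absorb the phase, and $n+1$ bits of classical communication implementing the classical analog of the traveling-qubit shuttle.

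First, Alice prepares an ancillary qubit $Q$ in $\ket{+}$ and sends it to Bob; this is the first of the two qubits of quantum communication. The players then execute the classical counterpart of Lemma~\ref{le:IP in phase}: a single classical bit is shuttled back and forth $n+1$ times, with each round cleaning off the sender's previously-communicated input bit by an XOR before adding the next one, exactly as in that lemma. Where the quantum protocol applies a \emph{CZ} between the traveling qubit and a local input, the hybrid instead applies a local $Z^{x_i y_i}$ to $Q$ whenever Bob holds it. Since $Q$ lives at Bob for the duration of the shuttle, Bob can apply his half of the phase contributions directly, using the received classical bit together with his own inputs. Alice's complementary phase contributions are accumulated into her single ancillary classical bit and then folded onto the final outgoing classical bit that she sends to Bob; Bob subtracts off the traveling bit's predictable value, using his knowledge of the XOR history, to isolate and apply the remaining $Z$ to $Q$.

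Once the full phase $(-1)^{x \cdot y}$ sits on $Q$, a Hadamard converts $Q$ into $\ket{x \cdot y}$ in the computational basis, and CNOTing $Q$ onto $B_{\textrm{ans}}$ writes the answer. To restore cleanliness, Bob reverses his Hadamard and undoes the $Z$ operations on $Q$, then sends $Q$ back to Alice (the second qubit of quantum communication), and Alice un-Hadamards $Q$ to $\ket{0}$; the traveling classical bit is restored round-by-round just as in Lemma~\ref{le:IP in phase}, and the ancillary bit is uncomputed along the way. The main obstacle is verifying the last-round piggyback: one must check that Alice's accumulator fits in exactly one ancillary bit and that merging it onto the final traveling bit leaves both parties with enough information to uncompute everything without extra communication. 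This should work because, by the structure inherited from Lemma~\ref{le:IP in phase}, the value of the traveling bit on the final round is a fixed XOR of previously-exchanged inputs which Bob can reconstruct locally, allowing him to cleanly separate it from Alice's accumulator.
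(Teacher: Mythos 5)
There is a genuine gap, and it is the central step: your classical accumulator for Alice's phase contributions cannot be uncomputed, so the protocol is not clean. In the shuttle of Lemma~\ref{le:IP in phase} the indices split into a set $S_A$ where Alice applies the phase and a set $S_B$ where Bob does. You have Alice record $c_A=\sum_{i\in S_A}x_iy_i \bmod 2$ in a classical ancilla bit and ship it to Bob at the end. But by the time the shuttle finishes, the $y_i$ values Alice used to build $c_A$ have been cleaned off the traveling bit, so she cannot locally reset her accumulator to $0$; XORing $c_A$ onto the outgoing message is reversible and leaves her copy intact. Nor can Bob erase it after extracting it: for $i\in S_A$ he knows $y_i$ but never cleanly retains $x_i$, so $c_A$ is input-dependent garbage stranded at one of the two parties. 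A second, related failure is the restoration of $Q$: after $Z^{x\cdot y}\ket{+}\mapsto H\mapsto\ket{x\cdot y}\mapsto$ CNOT onto $B_{\textrm{ans}}$, returning $Q$ to $\ket{0}$ requires knowing $x\cdot y$ again (the answer register only holds $z\oplus x\cdot y$ for unknown $z$), i.e.\ re-applying all the phases, which would require re-running the shuttle and doubling the classical communication.

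The paper's protocol avoids creating any classical record of partial inner products. Bob Hadamards the answer register and CNOTs it onto a fresh ancilla to form $\frac{1}{\sqrt{2}}\left(\ket{00}+\left(-1\right)^z\ket{11}\right)$, and sends one half to Alice (first qubit). Now \emph{both} players hold a quantum register, and each applies their phase contributions from Lemma~\ref{le:IP in phase} as a $Z$ conditioned on their own half of the entangled pair, round by round, while the $n+1$-bit classical shuttle runs exactly as before. Each contribution $\left(-1\right)^{x_iy_i}$ is absorbed coherently into the relative phase of the pair the moment it is computed, so nothing needs to be remembered or undone. Alice returns her half (second qubit), and Bob's CNOT plus Hadamard disentangles the pair and leaves $\ket{z\oplus x\cdot y}$ in the answer register with the ancilla back in $\ket{0}$. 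The missing idea in your proposal is precisely this distribution of entanglement so that Alice has a quantum phase accumulator of her own; a single qubit parked at Bob's side forces a classical side-channel for Alice's contributions, and that side-channel is what breaks cleanliness.
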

\begin{proof}
The protocol in Lemma~\ref{le:IP in phase} can be adapted to compute $\textit{IP}_n$ as follows:
\begin{enumerate}
\item Initially Bob holds the answer register $\ket{z}$ and an additional ancilla in $\ket{0}$. He performs a Hadamard gate on the answer register followed by a \emph{CNOT} on the additional ancilla using the answer register as the control qubit. This results in the entangled state $\frac{1}{\sqrt{2}}\left(\ket{00}+\left(-1\right)^z\ket{11}\right)$. He sends half of this state to Alice (1 qubit of communication).

\item Alice and Bob implement the protocol from Lemma~\ref{le:IP in phase} on $\ket{x}$ and $\ket{y}$ with one change: when they apply the global phases, they condition on their half of the entangled state.

This transforms their entangled pair to the state $\frac{1}{\sqrt{2}}\left(\ket{00}+\left(-1\right)^{z\oplus x\cdot y}\ket{11}\right)$. Alice then sends her half of the entangled state back to Bob (1 qubit of communication).

\item Bob applies a \emph{CNOT} on the ancilla using the answer register as the control. He then applies a Hadamard to the answer register to leave it in the state $\ket{z\oplus x\cdot y}$.
\end{enumerate}
The protocol in Lemma \ref{le:IP in phase} can be implemented using classical communication and 1 ancillary bit and requires $n+1$ bits to be exchanged, so we obtain the result. Only one ancilla qubit need be used to generate the entanglement at the beginning of the protocol.
\end{proof}

\subsection{Clean, classical protocol for Inner Product} \label{ap:clean class IP}

Here we give the full proof of Theorem \ref{th:clean class IP}.

\begin{duplicate*}[\textbf{Restatement of Theorem~\ref{th:clean class IP}}]
The clean, classical communication complexity of  exactly computing $\textit{IP}_n$ satisfies:
\begin{equation}
n+1\leq\textit{C}_{\textit{clean}}\left(\textit{IP}_n\right)\leq n + 4\sqrt{n} +\frac{1}{\sqrt{n}-1} +2.
\end{equation}
\end{duplicate*}
\begin{proof}
We shall construct a clean protocol $\Pi$ that achieves this bound. For fixed block size $k$, define $r=n \textrm{ mod } k$ and define $l$ by writing $n=kl+r$. For $i=1,\dots,l$ we define $x^{\left(i\right)}=x_{\left(i-1\right)k+1}\dots x_{ik}$ to be the $i$th block of $k$ bits of $x$ (and define $y^{\left(i\right)}$ similarly for $y$). By $x^{\left(l+1\right)}=x_{kl+1}\dots x_{n}0\dots0$ we denote the final $r$ bits of $x$ padded with $k-r$ zeros to give it total length $k$ (and again define $y^{\left(l+1\right)}$ similarly for $y$).

In the following we shall assume $l+1$ is even. The clean, classical protocol  $\Pi$ for $\textit{IP}_n$ then runs as follows:
\begin{enumerate}
\item Bob cleanly computes $\sum_{i:\textrm{even}}y^{\left(1\right)}\cdot y^{\left(i\right)}$ mod $2$ and stores it in the answer register $C$. He sends this together with the register (call this $B^{\left(1\right)}$) containing $y^{\left(1\right)}$ to Alice.
\item Alice cleanly computes $\sum_{i:\textrm{odd}}y^{\left(1\right)}\cdot x^{\left(i\right)}$ mod $2$, storing the answer by \emph{XOR}ing it onto $C$. 
\item Alice \emph{XOR}s $x^{\left(2\right)}$ onto the bits in $B^{\left(1\right)}$. This leaves $B^{\left(1\right)}$ in the state $y^{\left(1\right)}\oplus x^{\left(2\right)}$ where the addition is bit-wise and modulo 2.  She sends this together with $C$ back to Bob.
\item In round $j$ of the protocol ($2\leq j\leq \frac{l+1}{2}$):
\begin{enumerate}
\item Bob cleanly computes the inner product (mod 2) of the bits contained in $B^{\left(1\right)}$ and those labeled by $y^{\left(2j-2\right)}$, storing the answer by \emph{XOR}ing it onto $C$.
\item Bob \emph{XOR}s $y^{\left(2j-1\right)}$ onto the bits in $B^{\left(1\right)}$. This leaves $B^{\left(1\right)}$ in the state $y^{\left(1\right)}\oplus x^{\left(2j-2\right)}\oplus y^{\left(2j-1\right)}$. He sends $B^{\left(1\right)}$ and $C$ to Alice.
\item Alice \emph{XOR}s $x^{\left(2j-2\right)}$ onto the bits in $B^{\left(1\right)}$. This leaves $B^{\left(1\right)}$ in the state $y^{\left(1\right)}\oplus y^{\left(2j-1\right)}$.
\item Alice cleanly computes the inner product (mod 2) of the bits contained in $B^{\left(1\right)}$ and those labeled by $x^{\left(2j-1\right)}$, storing the answer by \emph{XOR}ing it onto $C$.
\item Alice \emph{XOR}s $x^{\left(2j\right)}$ onto the bits in $B^{\left(1\right)}$. This leaves $B^{\left(1\right)}$ in the state $y^{\left(1\right)}\oplus y^{\left(2j-1\right)}\oplus x^{\left(2j\right)}$. Alice sends $B^{\left(1\right)}$ and $C$ to Bob.
\item Bob \emph{XOR}s $y^{\left(2j-1\right)}$ onto the bits in $B^{\left(1\right)}$. This leaves $B^{\left(1\right)}$ in the state $y^{\left(1\right)}\oplus x^{\left(2j\right)}$.
\end{enumerate}
\item After round $\frac{l+1}{2}$, Bob holds $B^{\left(1\right)}$ in the state $y^{\left(1\right)}\oplus x^{\left(l+1\right)}$. He cleanly computes the inner product (mod 2) of the bits contained in $B^{\left(1\right)}$ and those labeled by $y^{\left(l+1\right)}$, storing the answer by \emph{XOR}ing it onto $C$. He then sends $B^{\left(1\right)}$ back to Alice and keeps $C$ as it now contains the correct answer.
\item Alice \emph{XOR}s $x^{\left(l+1\right)}$ onto the bits in $B^{\left(1\right)}$. This leaves $B^{\left(1\right)}$ in the state $y^{\left(1\right)}$. She sends $B^{\left(1\right)}$ to Bob, completing the clean protocol.
\end{enumerate}
This protocol exchanges:
\begin{equation}
\textit{C}\left(\Pi\right)=kl+3k+l+1,
\end{equation}
bits of communication. To obtain the bound, it remains to set $k=\left\lfloor \sqrt{n}\right\rfloor$. Then:
\begin{align*}
\textit{C}\left(\Pi\right)&\leq n + 3\left\lfloor \sqrt{n}\right\rfloor + \frac{n}{\left\lfloor \sqrt{n}\right\rfloor} +1\\
&\leq n + 3\sqrt{n} +\frac{n}{\sqrt{n}-1}+1\\
&= n + \frac{4n-3\sqrt{n}}{\sqrt{n}-1}+1\\
&= n + 4\sqrt{n} +\frac{1}{\sqrt{n}-1} +2.
\end{align*}

For odd $l+1$, less than $kl+3k+l+1$ bits of communication is required. Here, Alice applies the final operation to $C$ and sends this, together with $B^{\left(1\right)}$ in the state $y^{\left(1\right)}\oplus y^{\left(l+1\right)}$, back to Bob who converts this back to $y^{\left(1\right)}$ by \emph{XOR}ing $B^{\left(1\right)}$ with $y^{\left(l+1\right)}$. Hence the clean, classical communication complexity of $\textit{IP}_n$ satisfies the claimed bound.

As the local inner products can be computed cleanly using Toffoli gates without ancilla bits, this protocol does not require an ancilla.

The lower bound of $n+1$ follows from the lower bound on the clean, quantum communication complexity given in Theorem \ref{th:clean quant IP}.
\end{proof}

\newpage

\section{Clean complexity for random functions} 

\subsection{Clean classical complexity} \label{ap:clean class}

In this Appendix we give the full proofs of Lemmas~\ref{le:Kol comp} and \ref{le:Clean leak}.

\subsubsection{Proof of Lemma~\ref{le:Kol comp}}

\begin{duplicate*}[\textbf{Restatement of Lemma~\ref{le:Kol comp}}]
Consider picking uniformly at random a Boolean function $f_n$ on $n$-bit inputs. Then with probability $1-o(1)$, all protocols that compute $f_n$ exactly are such that either:
\begin{enumerate}
\item Alice must receive:
\begin{equation}
|a|\geq n-\log\left(n+1\right)-2,
\end{equation}
bits and there exists a uniform distribution over at least half the pairs of inputs such that:
\begin{equation}
I\left(Y:AX\right)\geq n -\log\left(n+1\right)-3.
\end{equation}
\end{enumerate}
Or:
\begin{enumerate}
\setcounter{enumi}{1}
\item Bob must receive:
\begin{equation}
|b|\geq n-\log\left(n+1\right)-2,
\end{equation}
bits and there exists a uniform distribution over at least half the pairs of inputs such that:
\begin{equation}
I\left(X:BY\right)\geq n -\log\left(n+1\right)-3.
\end{equation}
\end{enumerate}
\end{duplicate*}
\begin{proof}
Consider the communication that takes place in a protocol that results in Bob being able to calculate $f_n\left(x,y\right)$ correctly. Let $b$ denote the string of bits sent as messages to Bob during this protocol and $a$ the string of bits sent to Alice. Let $|b|$ and $|a|$ denote the number of bits each player receives over the course of the protocol.

The communication partitions the communication matrix of $f_n\left(x,y\right)$ into rectangles, $\mathcal{R}\left(a,b\right)=\mathcal{X}\left(a,b\right)\times\mathcal{Y}\left(a,b\right)$. Here $\mathcal{X}\left(a,b\right)$ and $\mathcal{Y}\left(a,b\right)$ denote the sets of Alice and Bob's respective inputs that are compatible with the communication that took place. As Bob knows the value of $f_n\left(x,y\right)$ at the end of the protocol, each rectangle will be monochromatically striped. An illustrative example of this is given in Figure~\ref{fig:striped rectangles}.

Note that in a protocol each bit of communication sent between the players partitions the communication matrix. In particular, each bit partitions the rectangles induced by the previous rounds of communication into at most two rectangles. A bit sent from Alice to Bob can split each rectangle horizontally while a bit sent in the other direction divides them vertically. As such, the largest $\mathcal{R}\left(a,b\right)$ has size at least $2^{n-|b|}\times 2^{n-|a|}$. 

\begin{figure}[ht]
\centering
\includegraphics[width=.5\linewidth]{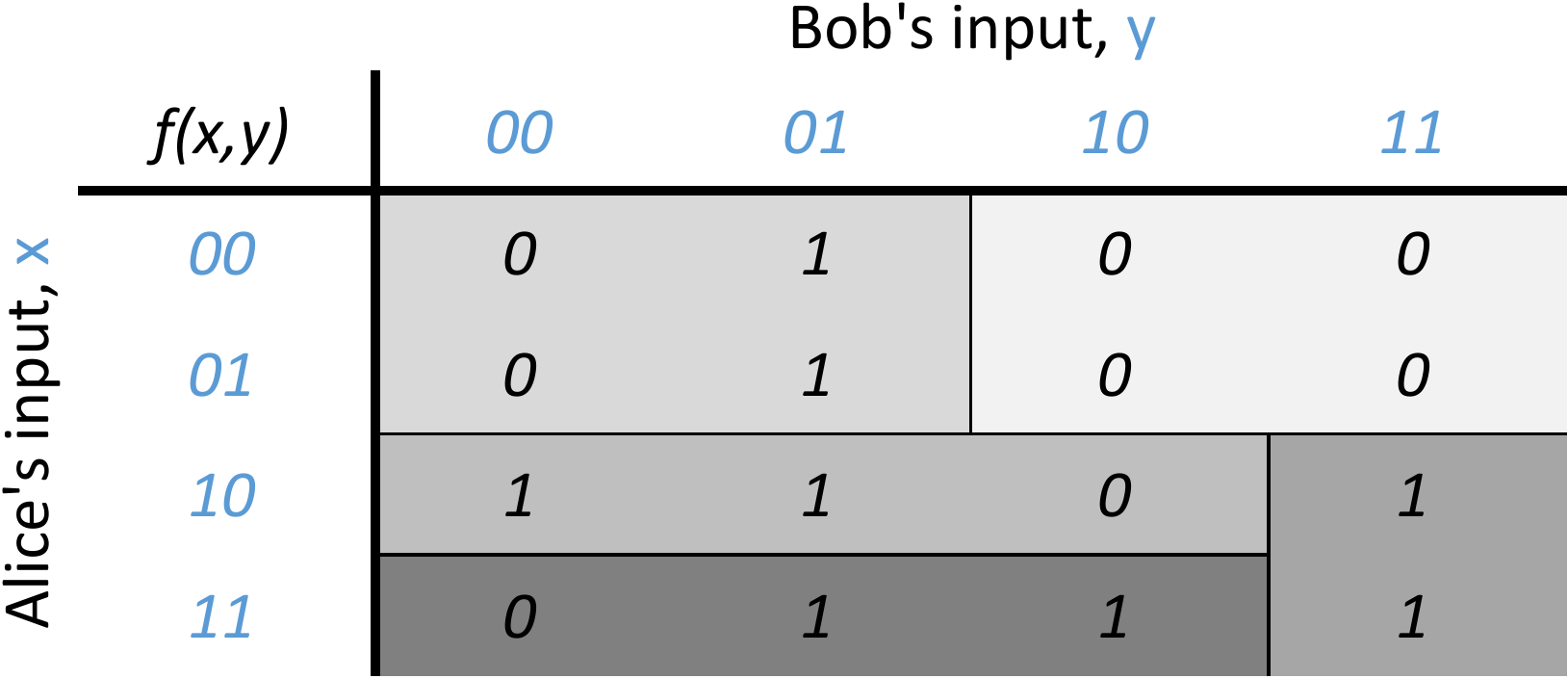}
\caption{\emph{Monochromatically striped rectangles.} Here we show the rectangles generated on $M^f$ by a communication protocol. Knowledge of $y$, together with knowledge of which rectangle the players' input pair belongs to, allows Bob to correctly deduce the value of $f\left(x,y\right)$.}
\label{fig:striped rectangles}.
\end{figure}

We shall now consider the Kolmogorov complexity of the communication matrix of $f_n$. Let $K\left(s|k\right)$ denote the conditional Kolmogorov complexity of a $k$-bit string $s$ given knowledge of $k$. The communication matrix of a bipartite Boolean function $f_n$ on inputs of size $n$ is of size $2^{n}\times2^{n}$. Then, setting $c=\log n$ in Lemma~\ref{le:rand Kol}, the fraction of $f_n$ such that:
\begin{equation} \label{eq:KC lb}
K\left(M^{f_n}|2^{2n}\right)\geq2^{2n}-\log n,
\end{equation}
tends to 1 as $n$ increases.

Now consider the Kolmogorov complexity of a communication matrix which has a monochromatically striped rectangle $\mathcal{R'}=\mathcal{X}'\times\mathcal{Y}'$ of size $2^{n-|b|}\times 2^{n-|a|}$. We can obtain an upper bound on the Kolmogorov complexity as follows. To specify the value of the bits inside $\mathcal{R}'$ we specify the strings inside $\mathcal{X}'$ and $\mathcal{Y}'$. This requires $n2^{n-|b|}$ and $n2^{n-|a|}$ bits respectively together with $2\log n$ bits to specify the value of $|a|$ and $|b|$. To specify the value of each stripe in $\mathcal{R'}$ then requires $2^{n-|a|}$ bits. Finally, we specify the value of the function on each pair of inputs outside of $\mathcal{R'}$. From this we see that for a function with a rectangle of size at least $2^{n-|b|}\times 2^{n-|a|}$, the Kolmogorov complexity of the associated communication matrix satisfies:
\begin{equation}\label{eq:KC ub}
K\left(M^{f_n}|2^{2n}\right)\leq 2^{2n}-2^{2n-|a|-|b|}+n\left(2^{n-|b|}+2^{n-|a|}\right)+2^{n-|a|}+2\log n.
\end{equation}

Comparing Eq.~\eqref{eq:KC ub} with Eq.~\eqref{eq:KC lb}, we see that for a fraction of $f_n$ that tends to 1 with increasing $n$, we have:
\begin{align*}
2^{2n}-\log n &\leq 2^{2n}-2^{2n-|a|-|b|}+n\left(2^{n-|b|}+2^{n-|a|}\right)+2^{n-|a|}+2\log n,\\
\Rightarrow\phantom{{}2^{n}{}}\quad2^{2n-|a|-|b|}&\leq n 2^{n-|b|} +\left(n+1\right)2^{n-|a|} + 3\log n\\
&\leq 2\left(n+1\right) 2^{n-|b|}+2\left(n+1\right) 2^{n-|a|},\\
\Rightarrow\quad\phantom{{}2^{2n-|a|-|b|}{}} 2^{n}&\leq 4\left(n+1\right)2^{\max\left(|a|,|b|\right)}.
\end{align*}
Hence:
\begin{equation}
\max\left(|a|,|b|\right)\geq n-\log\left(n+1\right)-2,
\end{equation}
and for most functions either Alice or Bob must send at least $n-\log\left(n+1\right)-2$ bits of communication to the other player.

We now turn to proving the second half of the lemma. From the first part we know that most $f_n$ have neither monochromatically striped rectangles of size larger than  $4\left(n+1\right)\times 2^n$ nor $2^n \times 4\left(n+1\right)$. Any correct protocol for $f_n$ induces a partition of the communication matrix of $f_n$ into monochromatically striped rectangles and either at least half of all pairs of inputs will lie in rectangles of size smaller than $4\left(n+1\right)\times 2^n$ or in rectangles of size smaller than $2^n \times 4\left(n+1\right)$ (some rectangles will belong to both sets).

Consider the case where at least half of the input pairs lie in rectangles of size no larger than $4\left(n+1\right)\times 2^n$. Consider the distribution over $x$ and $y$ formed by picking Alice and Bob's inputs uniformly at random from pairs belonging to such rectangles. This leads to associated random variables $X,Y,A$ and $B$ for the players' inputs and communication. Now, at the end of the protocol once the players know their inputs belong to a particular rectangle, Bob will know that Alice received one of $4\left(n+1\right)$ inputs and $|b|\geq n-\log\left(n+1\right)-2$ holds. Hence:
\begin{equation}
H\left(X|BY\right)\leq \log\left[4\left(n+1\right)\right]= \log\left(n+1\right)+2.
\end{equation}
Now, using the fact that $H\left(X\right)\geq n-1$ as the distribution is uniform over at least half the inputs, we obtain:
\begin{align*}
I\left(X:BY\right)&=H\left(X\right)-H\left(X|BY\right)\\
&\geq n-\log\left(n+1\right)-3,
\end{align*}
as required.

The second inequality follows similarly when over half of the input pairs lie in rectangles of size no larger than $2^n \times 4\left(n+1\right)$.
\end{proof}

\subsubsection{Proof of Lemma~\ref{le:Clean leak}}

\begin{duplicate*}[\textbf{Restatement of Lemma~\ref{le:Clean leak}}]
Let $f$ be a Boolean function and its inputs be chosen uniformly at random from over at least half of the possible pairs. Then, in a clean protocol for exactly computing $f$:
\begin{equation}  \label{eq:Bob bound}
|b|\geq I\left(Y:XA\right)-I\left(X:Y\right),
\end{equation}
and:
\begin{equation} \label{eq:Alice bound}
|a|\geq I\left(X:YB\right)-I\left(X:Y\right)-1.
\end{equation}
\end{duplicate*}
\begin{proof}
Here we shall show Eq.~\eqref{eq:Alice bound}. The bound in Eq.~\eqref{eq:Bob bound} follows similarly (the difference of 1 bit occurs as only Bob knows the answer at the end of a clean protocol). Let there be $r$ rounds of communication in which both players speak. We shall assume without loss of generality that Alice speaks first. Let $B_i$ denote the bits received by Bob in round $i$ and $B'_i$ his local memory at the end of the round ($B'_0$ is of trivial size). In order to compute his message to Alice, $A_i$, in round $i$, Bob makes a reversible, deterministic transformation:
\begin{equation}
\mathcal{T}_i:YB'_{i-1}B_i\mapsto Y B'_i A_i.
\end{equation}
Note that in a clean protocol, Bob's final register $B'_r$ will contain at most one bit of information regarding $X$, the value of $f\left(x,y\right)$.

Using the chain rule for the conditional mutual information and the fact that reversible, deterministic transformations do not change the entropy, we have:
\begin{align*}
I\left(X:YB\right)&=I\left(X:YB'_0B\right)\\
&=I\left(X:YB'_0B_1\dots B_r\right)\\
&=I\left(X:YB'_1A_1B_2\dots B_r\right)\\
&=I\left(X:YB'_1B_2\dots B_r\right)+I\left(X:A_1|YB'_1B_2\dots B_r\right)\\
&=I\left(X:YB'_r\right)+\sum_{i=1}^{r} I\left(X:A_i|YB'_i B_{i+1}\dots B_r\right)\\
&=I\left(X:Y\right) + I\left(X:B'_r|Y\right) + \sum_{i=1}^{r} I\left(X:A_i|YB'_i B_{i+1}\dots B_r\right)\\
&\leq I\left(X:Y\right)+1+|a|,
\end{align*}
where in the last line we have bounded the terms using the fact that the protocol is clean and that the conditional mutual information can be upper bounded by the number of bits contained in $A_i$. Rearranging this gives:
\begin{equation}
|a|\geq I\left(X:YB\right)-2,
\end{equation}
as claimed.
\end{proof}

\subsection{Clean, entanglement assisted quantum complexity} \label{ap:rand ent}

In this Appendix we prove Theorem~\ref{th:rand ent}:
\begin{duplicate*}[\textbf{Restatement of Theorem~\ref{th:rand ent}}]
Consider exactly computing a Boolean function $f_n$ on $n$-bit inputs that has been picked uniformly at random. Then with probability $1-o(1)$:
\begin{equation}
\textit{Q}^*_{\textit{clean}}\left(f_n\right)\geq n-\log n.
\end{equation}
\end{duplicate*}
To do so we make use of the concept of information complexity defined in Appendix \ref{ap:inf comp}. The following two states will be useful:
\begin{enumerate}
\item When Alice and Bob are both given classical inputs according to some product distribution $\mu_A \times \mu_B$ and the answer register (held by Bob) begins in the state $\ket{-}$, the initial state of the protocol is:
\begin{equation} \label{eq:class-class}
\ket{\Psi_{\textrm{c-c}}}_{A_{\textrm{IN}}B_{\textrm{IN}}R}=\sum_{x,y}{\sqrt{\mu_A\left(x\right)\mu_B\left(y\right)}}\ket{x}_{A}\ket{\vec{0}}_{A_0}\ket{y}_{B}\ket{\vec{0}}_{B_0}\ket{-}_{B_{\textrm{ans}}}\ket{\Phi}_{A_EB_E}\ket{xy}_R,
\end{equation}
where $A_0$ and $B_0$ hold Alice and Bob's local ancilla states and $A_EB_E$ any entanglement initially shared between them.

\item Alice is given a classical input according to $\mu_A$ while Bob is given a quantum superposition over classical inputs according to a distribution $\mu_B$ (together with the answer register again initialized to $\ket{-}$) and entangles it with $n$ blank ancilla qubits using $n$-fold $\emph{CNOT}$ gates. The initial state of the protocol is then:
\begin{equation} \label{eq:class-sup}
\ket{\Psi_{\textrm{c-s}}}_{A_{\textrm{IN}}B_{\textrm{IN}}R}=\sum_{x,y}{\sqrt{\mu_A\left(x\right)\mu_B\left(y\right)}}\ket{x}_{A}\ket{\vec{0}}_{A_0}\ket{y}_{B}\ket{y}_{B'}\ket{\vec{0}}_{B_0}\ket{-}_{B_{\textrm{ans}}}\ket{\Phi}_{A_EB_E}\ket{x}_R.
\end{equation}
The protocol then runs on the $A$, $B$, $A_0$, $B_0$, $A_E$, $B_E$ and $B_{\textrm{ans}}$ registers.
\end{enumerate}
With these defined, a sketch of the proof of the theorem is as follows:
\begin{enumerate}
\item We begin by showing (Corollary~\ref{co:clean}) that for clean protocols on states of the form in Eq.~\eqref{eq:class-class} the amount of information that flows from Alice to Bob must equal the amount of information that flows from Bob to Alice.
\item Next we show (Lemma~\ref{le:cs vs ss}) that for any protocol more information leaks from Alice to Bob if the protocol is applied to the state in Eq.~\eqref{eq:class-class} then if it is applied to the state in Eq.~\eqref{eq:class-sup}.
\item In Theorem~\ref{th:rand clean inf} we use the above two results to argue that for most functions any clean protocol must leak close to $n$ bits of information from Alice to Bob when applied to the state in Eq.~\eqref{eq:class-sup} and hence that for most functions any clean protocol has information cost close to $n$.
\item The fact that the information cost of a protocol lower bounds its communication complexity completes the proof.
\end{enumerate}

We start with the following lemma which relates the amount of information that flows in either direction during the protocol to the information shared between the parties at the beginning and end:
\begin{lemma} \label{le:inf flow}
For any protocol $\Pi$ on any input state $\ket{\Psi}_{A_{\textrm{IN}}B_{\textrm{IN}}R}$:
\begin{equation}
I\left(B_{\textrm{OUT}}:R\right)=I\left(B_{\textrm{IN}}:R\right)-\textit{QLB}\left(\Pi,\ket{\Psi}_{A_{\textrm{IN}}B_{\textrm{IN}}R}\right)+\textit{QLA}\left(\Pi,\ket{\Psi}_{A_{\textrm{IN}}B_{\textrm{IN}}R}\right).
\end{equation}
\end{lemma}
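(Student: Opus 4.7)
The plan is to telescope the quantity $I(B_{\textrm{OUT}}:R) - I(B_{\textrm{IN}}:R)$ round by round. Write $A_i$ and $B_i$ for Alice's and Bob's registers at the end of round $i$, with $A_0 = A_{\textrm{IN}}$, $B_0 = B_{\textrm{IN}}$, $A_r = A_{\textrm{OUT}}$, $B_r = B_{\textrm{OUT}}$. Since $\ket{\Psi}_{A_{\textrm{IN}}B_{\textrm{IN}}R}$ is pure by the purification assumption and each round consists of a local unitary on one party's side followed by reassignment of a subsystem to the other party, the global state on $A_i B_i R$ remains pure after every round.

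For $i$ odd, Alice applies a unitary to $A_{i-1}$, conceptually splitting it as $A_i \sqcup C_i$, and sends $C_i$ to Bob, so $B_i \cong B_{i-1} C_i$. The local unitary leaves the marginal on $B_{i-1} R$ unchanged, and the chain rule for mutual information yields
\begin{equation*}
I(B_i : R) - I(B_{i-1} : R) = I(B_{i-1} C_i : R) - I(B_{i-1} : R) = I(C_i : R \mid B_{i-1}),
\end{equation*}
which is precisely the $i$-th term of $\textit{QLA}(\Pi, \ket{\Psi})$. For $i$ even, Bob applies a unitary that rewrites $B_{i-1}$ as $B_i C_i$ and sends $C_i$ to Alice; the same chain rule, run in reverse, gives
\begin{equation*}
I(B_i : R) - I(B_{i-1} : R) = -I(C_i : R \mid B_i).
\end{equation*}

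The step that requires real work is matching the conditioning system $B_i$ appearing here with the conditioning system $A_{i-1}$ used in the definition of $\textit{QLB}$. Immediately after Bob's unitary but before $C_i$ is sent, the global state on $A_{i-1} B_i C_i R$ is pure. For any pure four-partite state the identity $I(X:Y\mid Z) = I(X:Y\mid W)$ holds whenever $\{X,Y,Z,W\}$ is a partition of the system; this is an immediate consequence of $S(T) = S(T^c)$ applied term by term in the definition of conditional mutual information. Applying it with $X = C_i$, $Y = R$, $Z = B_i$, $W = A_{i-1}$ shows $I(C_i : R \mid B_i) = I(C_i : R \mid A_{i-1})$, so the even-round contribution is exactly minus the $i$-th term of $\textit{QLB}(\Pi,\ket{\Psi})$.

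Summing all $r$ rounds yields
\begin{equation*}
I(B_{\textrm{OUT}}:R) - I(B_{\textrm{IN}}:R) = \textit{QLA}(\Pi,\ket{\Psi}) - \textit{QLB}(\Pi,\ket{\Psi}),
\end{equation*}
which rearranges to the claim. The only non-routine ingredient is the pure-state symmetry used to replace the conditioning system $B_i$ by $A_{i-1}$; everything else is iterated chain rule together with invariance of marginal entropies under local unitaries.
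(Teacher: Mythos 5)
Your proof is correct and follows essentially the same route as the paper's: both telescope $I(B:R)$ round by round, using the chain rule when Bob receives a message and the purity of the global state to trade the conditioning system on Bob's side for $A_{i-1}$ when Bob sends one. The paper carries this out by expanding and recombining entropies two rounds at a time via complement-entropy equalities, whereas you isolate the pure-state identity $I(X:Y|Z)=I(X:Y|W)$ as an explicit lemma, but the underlying argument is identical.
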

\begin{proof}
Consulting Figure~\ref{fig:InfCost} may aid in following the proof. In particular, it provides an overview of which systems are in existence at any one time. By definition and using the fact that the global state is pure:
\begin{align*}
I\left(B_{\textrm{IN}}:R\right)&=I\left(C_2:R|A_1\right)-I\left(C_2:R|A_1\right)+I\left(B_{\textrm{IN}}:R\right)\\
&=I\left(C_2:R|A_1\right)\\
&\quad+S\left(A_1\right)+S\left(C_2RA_1\right)-S\left(C_2A_1\right)-S\left(RA_1\right)\\
&\quad+S\left(B_0\right)+S\left(R\right)-S\left(B_0R\right)\\
&=I\left(C_2:R|A_1\right)\\
&\quad+S\left(C_1B_0R\right)+S\left(B_2\right)-S\left(B_2R\right)-S\left(C_1B_0\right)\\
&\quad+S\left(B_0\right)+S\left(R\right)-S\left(B_0R\right)\\
&=I\left(C_2:R|A_1\right)+I\left(B_2:R\right)-I\left(C_1:R|B_0\right).
\end{align*}
Iterating this expansion for the (non-conditional) mutual information on the RHS gives:
\begin{equation*}
I\left(B_{\textrm{OUT}}:R\right)=I\left(B_{\textrm{IN}}:R\right)-\sum_{i\textrm{ even}} I\left(C_i:R|A_{i-1}\right)+\sum_{i\textrm{ odd}} I\left(C_i:R|B_{i-1}\right),
\end{equation*}
as required.
\end{proof}
From this, the following corollary follows completing Step 1 of the proof sketch:
\begin{corollary} \label{co:clean}
For any clean protocol $\Pi$ for computing $f$ that is applied to a state $\ket{\Psi_{\textrm{c-c}}}_{A_{\textrm{IN}}B_{\textrm{IN}}R}$ of the form given in Eq.~\eqref{eq:class-class}:
\begin{equation}
\textit{QLB}\left(\Pi,\ket{\Psi_{\textrm{c-c}}}_{A_{\textrm{IN}}B_{\textrm{IN}}R}\right)=\textit{QLA}\left(\Pi,\ket{\Psi_{\textrm{c-c}}}_{A_{\textrm{IN}}B_{\textrm{IN}}R}\right).
\end{equation}
\end{corollary}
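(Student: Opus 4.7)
The plan is to invoke Lemma~\ref{le:inf flow}, which states
$$I(B_{\textrm{OUT}}:R) = I(B_{\textrm{IN}}:R) - \textit{QLB}(\Pi,\ket{\Psi}) + \textit{QLA}(\Pi,\ket{\Psi})$$
for any protocol on any input. So it suffices to show that when the clean protocol $\Pi$ is applied to $\ket{\Psi_{\textrm{c-c}}}$, the mutual information with the reference is preserved, i.e.\ $I(B_{\textrm{OUT}}:R) = I(B_{\textrm{IN}}:R)$.

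First I would write down the final state explicitly. By the definition of a clean protocol, on every basis input $\ket{x}_A\ket{y}_B\ket{z}_{B_{\textrm{ans}}}$ with $z\in\{0,1\}$ (together with the prescribed zero ancillas and shared entanglement) the protocol outputs $\ket{x}_A\ket{y}_B\ket{z\oplus f(x,y)}_{B_{\textrm{ans}}}$ while restoring all ancillas and the entanglement $\ket{\Phi}_{A_EB_E}$. Because the protocol is a unitary, linearity implies that the input $\ket{-}_{B_{\textrm{ans}}}$ is mapped to $(-1)^{f(x,y)}\ket{-}_{B_{\textrm{ans}}}$ on each $(x,y)$-branch. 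Consequently the final global state differs from $\ket{\Psi_{\textrm{c-c}}}$ only by the phase factor $(-1)^{f(x,y)}$ within each branch, and this phase depends only on the classical labels $x,y$ that are already recorded in the reference $R$.

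I can therefore write $\ket{\Psi^{\textrm{final}}} = V\ket{\Psi_{\textrm{c-c}}}$ with $V = \sum_{x,y}(-1)^{f(x,y)}\ketbra{xy}{xy}_R$ a unitary supported entirely on $R$. Since $V$ and the partial trace over $A_{\textrm{OUT}}$ act on disjoint factors, the joint reduced state transforms as $\rho_{B_{\textrm{OUT}}R}^{\textrm{final}} = V\rho_{B_{\textrm{IN}}R}^{\textrm{initial}}V^\dagger$, which preserves von Neumann entropy. The same argument gives $S(\rho_R^{\textrm{final}}) = S(\rho_R^{\textrm{initial}})$. For the marginal on $B_{\textrm{OUT}}$, $V$ applies only a scalar phase to each basis vector $\ket{xy}_R$, so tracing over $R$ washes these phases out and yields $\rho_{B_{\textrm{OUT}}}^{\textrm{final}} = \rho_{B_{\textrm{IN}}}^{\textrm{initial}}$. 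Combining the three equalities gives $I(B_{\textrm{OUT}}:R) = I(B_{\textrm{IN}}:R)$, and Lemma~\ref{le:inf flow} then forces $\textit{QLA} = \textit{QLB}$.

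The only delicate point, rather than a real obstacle, is verifying that the linearity argument genuinely applies in the presence of the shared entanglement $\ket{\Phi}_{A_EB_E}$ and the local ancillas: one must check that the clean-protocol definition really does restore these on every basis branch, so that in the superposition over $x,y$ they factor out identically at the start and at the end, leaving the $(x,y)$-dependent phase as the only net effect of the protocol.
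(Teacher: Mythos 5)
Your proof is correct and follows essentially the same route as the paper: both invoke Lemma~\ref{le:inf flow} and reduce the claim to showing $I\left(B_{\textrm{OUT}}:R\right)=I\left(B_{\textrm{IN}}:R\right)$, using the fact that a clean protocol's only net effect on $\ket{\Psi_{\textrm{c-c}}}$ is the branch phase $\left(-1\right)^{f\left(x,y\right)}$. The paper establishes this invariance via purity of the global state together with $S\left(A_{\textrm{OUT}}\right)=S\left(A_{\textrm{IN}}\right)$ and $S\left(B_{\textrm{OUT}}\right)=S\left(B_{\textrm{IN}}\right)$, whereas you push the phase onto $R$ as a local unitary; this is only a cosmetic difference in bookkeeping.
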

\begin{proof}
As the protocol is clean, the output state is:
\begin{equation*}
\ket{\Psi_{\textrm{c-c}}}_{A_{\textrm{OUT}}B_{\textrm{OUT}}R}=\sum_{x,y}{\sqrt{\mu_A\left(x\right)\mu_B\left(y\right)}}\left(-1\right)^{f\left(x,y\right)}\ket{x}_{A}\ket{\vec{0}}_{A_0}\ket{y}_{B}\ket{\vec{0}}_{B_0}\ket{-}_{B_{\textrm{ans}}}\ket{\Phi}_{A_EB_E}\ket{xy}_R,
\end{equation*}
and note that $S\left(A_{\textrm{OUT}}\right)=S\left(A_{\textrm{IN}}\right)$ and $S\left(B_{\textrm{OUT}}\right)=S\left(B_{\textrm{IN}}\right)$. Now, as the total state is pure:
\begin{align*}
I\left(B_{\textrm{OUT}}:R\right)-I\left(B_{\textrm{IN}}:R\right)&=S\left(B_{OUT}\right)+S\left(R\right)-S\left(B_{OUT}R\right)-S\left(B_{IN}\right)-S\left(R\right)+S\left(B_{IN}R\right)\\
&=S\left(B_{OUT}\right)-S\left(A_{OUT}\right)-S\left(B_{\textrm{IN}}\right)+S\left(A_{\textrm{IN}}\right)\\
&=0,
\end{align*}
so $\textit{QLB}\left(\Pi,\ket{\Psi_{\textrm{c-c}}}_{A_{\textrm{IN}}B_{\textrm{IN}}R}\right)=\textit{QLA}\left(\Pi,\ket{\Psi_{\textrm{c-c}}}_{A_{\textrm{IN}}B_{\textrm{IN}}R}\right)$ as required.
\end{proof}
Hence for clean protocols on such states, the amount of information leaked from Alice to Bob is equal to the amount of information leaked from Bob to Alice.

For Step 2 of the proof sketch, the amount of information leaked when the players' inputs are classical-classical or classical-superposition is related as follows. This is a special case of \cite[Theorem 1]{kerenidis2014privacy}.
\begin{lemma} \label{le:cs vs ss}
Given two probability distributions $\mu_A\left(x\right)$ and $\mu_B\left(y\right)$, for any protocol $\Pi$ applied to the appropriate registers:
\begin{equation}
\textit{QLA}\left(\Pi,\ket{\Psi_{\textrm{c-c}}}_{A_{\textrm{IN}}B_{\textrm{IN}}R}\right)\geq \textit{QLA}\left(\Pi,\ket{\Psi_{\textrm{c-s}}}_{A_{\textrm{IN}}B_{\textrm{IN}}R}\right).
\end{equation}
\end{lemma}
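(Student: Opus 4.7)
The plan is to exhibit an isometry on the purifying register that converts $\ket{\Psi_{\textrm{c-c}}}$ into $\ket{\Psi_{\textrm{c-s}}}$, and then to bound the QLA of the latter by the QLA of the former using the chain rule. Intuitively, the cs state is obtained from the cc state by ``moving'' the part of the external reference that purifies Bob's input $y$ onto Bob's side, where it becomes the register $B'$; since the protocol $\Pi$ never touches the reference (or $B'$) in either picture, this reshuffling commutes with the entire execution.

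Concretely, define the isometry $V : R_{\textrm{c-c}} \to B' \otimes R_{\textrm{c-s}}$ by $V\ket{xy}_R = \ket{y}_{B'}\ket{x}_R$ and verify by direct inspection of Eqs.~\eqref{eq:class-class} and \eqref{eq:class-sup} that $(I_{A_{\textrm{IN}}B_{\textrm{IN}}} \otimes V)\ket{\Psi_{\textrm{c-c}}} = \ket{\Psi_{\textrm{c-s}}}$. Because $\Pi$ applies the same unitaries on the registers $A, A_0, B, B_0, A_E, B_E, B_{\textrm{ans}}$ in both settings and the identity on both the purifier and on $B'$, the isometry $V$ commutes with every round, so $\ket{\psi_i^{\textrm{c-s}}} = (I \otimes V)\ket{\psi_i^{\textrm{c-c}}}$ throughout the protocol. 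Invariance of quantum mutual information under an isometry applied to one of its arguments then gives, for each odd $i$,
\begin{equation*}
I(C_i : R_{\textrm{c-c}} \,|\, B_{i-1})_{\psi_i^{\textrm{c-c}}} = I(C_i : B' R_{\textrm{c-s}} \,|\, B_{i-1})_{\psi_i^{\textrm{c-s}}},
\end{equation*}
where $B_{i-1}$ denotes Bob's registers excluding $B'$, whose reduced state agrees in the two pictures since $V$ acts only on the purifier.

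The final step is the chain rule
\begin{equation*}
I(C_i : B' R_{\textrm{c-s}} \,|\, B_{i-1}) = I(C_i : B' \,|\, B_{i-1}) + I(C_i : R_{\textrm{c-s}} \,|\, B_{i-1} B'),
\end{equation*}
combined with $I(C_i : B' \,|\, B_{i-1}) \geq 0$ by strong subadditivity. The remaining term is exactly the round-$i$ contribution to $\textit{QLA}(\Pi,\ket{\Psi_{\textrm{c-s}}})$, since in the cs picture Bob's full register includes $B'$; summing over odd $i$ yields the lemma. The main conceptual hurdle is writing down the correct isometry and confirming that it commutes with the protocol execution; once this identification is in place, the inequality itself is a one-line chain-rule calculation.
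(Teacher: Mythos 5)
Your proof is correct and is essentially the paper's own argument: the paper's proof consists of exactly the same chain-rule step $I\left(C_i:R|B_{i-1}B'\right)=I\left(C_i:RB'|B_{i-1}\right)-I\left(C_i:B'|B_{i-1}\right)\leq I\left(C_i:RB'|B_{i-1}\right)$ together with the observation that $RB'$ "relabeled" is precisely the purifier of $\ket{\Psi_{\textrm{c-c}}}$ — which is the isometry $V$ you write out explicitly. Making $V$ explicit and checking it commutes with the execution is a slightly more careful presentation of the same identification, not a different route.
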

\begin{proof}
Considering each term in $\textit{QLA}\left(\Pi,\ket{\Psi_{\textrm{c-s}}}_{A_{\textrm{IN}}B_{\textrm{IN}}R}\right)$ individually:
\begin{align*}
I\left(C_i:R|B_{i-1}B'\right)&=I\left(C_i:R B'|B_{i-1}\right)-I\left(C_i:B'|B_{i-1}\right)\\
&\leq I\left(C_i:R'|B_{i-1}\right),
\end{align*}
where in the last line we have relabeled $RB'$ as $R'$ and used the non-negativity of the conditional mutual information. The system $R'$ is precisely the purifying system in the state $\ket{\Psi_{\textrm{c-c}}}_{A_{\textrm{IN}}B_{\textrm{IN}}R'}$ and hence summing over odd $i$ gives the result.
\end{proof}

We now combine the previous two results to achieve Step 3, showing:
\begin{theorem}\label{th:rand clean inf}
Consider exactly computing a Boolean function $f_n$ on $n$-bit inputs that has been picked uniformly at random. Let $\mu_A\left(x\right)$ and $\mu_B\left(y\right)$ be independent, uniform distributions. Then with probability $1-o\left(1\right)$, any protocol $\Pi$ for computing $f_n$ cleanly is such that for these distributions: 
\begin{equation}
\textit{QIC}\left(\Pi,\ket{\Psi_{\textrm{c-c}}}_{A_{\textrm{IN}}B_{\textrm{IN}}R}\right)\geq n-\log n.
\end{equation}
\end{theorem}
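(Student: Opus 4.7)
The plan is to deploy the chain
\begin{equation*}
\textit{QIC}(\Pi,\ket{\Psi_{\textrm{c-c}}}) = \textit{QLA}(\Pi,\ket{\Psi_{\textrm{c-c}}}) \geq \textit{QLA}(\Pi,\ket{\Psi_{\textrm{c-s}}}) \geq n-\log n,
\end{equation*}
where the equality follows from the definition $\textit{QIC}=\frac{1}{2}(\textit{QLA}+\textit{QLB})$ together with Corollary~\ref{co:clean}, and the first inequality is Lemma~\ref{le:cs vs ss}. The remaining task is to establish the final bound on $\textit{QLA}(\Pi,\ket{\Psi_{\textrm{c-s}}})$ with probability $1-o(1)$ over the choice of $f_n$.

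To do so I would apply Lemma~\ref{le:inf flow} to the c-s state, treating the copy register $B'$ (on which $\Pi$ acts trivially) as part of Bob's side throughout the protocol. Because $\mu_A\times\mu_B$ is a product distribution, the initial reduced state on $(B_{\textrm{IN}}B',R)$ factorizes between the $y$-side and the $x$-side, so $I(B_{\textrm{IN}}B':R)=0$. Lemma~\ref{le:inf flow} therefore yields
\begin{equation*}
\textit{QLA}(\Pi,\ket{\Psi_{\textrm{c-s}}}) = I(B_{\textrm{OUT}}B':R) + \textit{QLB}(\Pi,\ket{\Psi_{\textrm{c-s}}}) \geq I(B_{\textrm{OUT}}B':R).
\end{equation*}
Because $\Pi$ is clean, the output state equals $\ket{\Psi_{\textrm{c-s}}}$ dressed by the phase $(-1)^{f_n(x,y)}$ sitting in the $\ket{-}_{B_{\textrm{ans}}}$ register. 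Tracing out every register except $B$, $B'$, and $R$ then gives the classical-quantum state $\rho_{BB'R}=\sum_x \mu_A(x)\, \ketbra{x}{x}_R \otimes \ketbra{\psi_x}{\psi_x}_{BB'}$ with $\ket{\psi_x}=2^{-n/2}\sum_y (-1)^{f_n(x,y)}\ket{y}_B\ket{y}_{B'}$. Since the conditionals on $R$ are pure and $R$ is uniform, a short computation gives $I(B_{\textrm{OUT}}B':R) = S(\rho_{BB'})$.

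The main obstacle, and the only place where the randomness of $f_n$ is used, is to show $S(\rho_{BB'}) \geq n-\log n$ with probability $1-o(1)$. For this I would invoke the Rényi-2 bound $S(\rho_{BB'}) \geq -\log\tr[\rho_{BB'}^2]$ and argue that the purity of $\rho_{BB'}$ concentrates around $2^{-n}$. Expanding, $\tr[\rho_{BB'}^2] = 2^{-2n}\sum_{x,x'}|\bracket{\psi_x}{\psi_{x'}}|^2$, and for $x\neq x'$ the inner product $\bracket{\psi_x}{\psi_{x'}}=2^{-n}\sum_y (-1)^{f_n(x,y)\oplus f_n(x',y)}$ is a normalised sum of $2^n$ independent uniform $\pm 1$ variables under the random choice of $f_n$, so $\mathbb{E}[|\bracket{\psi_x}{\psi_{x'}}|^2]=2^{-n}$. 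Adding the $x=x'$ diagonal contribution yields $\mathbb{E}[\tr[\rho_{BB'}^2]] \leq 2\cdot 2^{-n}$, and Markov's inequality gives $\tr[\rho_{BB'}^2] \leq n\cdot 2^{-n}$ with probability at least $1-2/n$. Substituting into the Rényi-2 bound produces $S(\rho_{BB'}) \geq n-\log n$, completing the argument.
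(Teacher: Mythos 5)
Your argument is correct and follows the same skeleton as the paper's proof: reduce via Corollary~\ref{co:clean} and Lemma~\ref{le:cs vs ss} to lower-bounding $\textit{QLA}$ on the classical--superposition state, then use Lemma~\ref{le:inf flow} together with $I\left(B_{\textrm{IN}}:R\right)=0$ and the purity of the conditional states to obtain $\textit{QLA}\left(\Pi,\ket{\Psi_{\textrm{c-s}}}\right)\geq S\left(\rho_{BB'}\right)$, where $\rho_{BB'}=2^{-n}\sum_x\ketbra{\psi_x}{\psi_x}$ depends only on $f_n$ and not on $\Pi$ (which is what lets the bound hold for all clean protocols simultaneously). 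The one place you genuinely diverge is the final entropy concentration step. The paper invokes a result of Montanaro (Theorem IV.1 of that reference), which gives $\Pr\left[S\left(\rho_{BB'}\right)<\left(1-\delta\right)n\right]\leq e^{-\left(2^{\delta n}-1\right)^2/2}$, a doubly exponentially small failure probability. You instead give a self-contained second-moment argument: $S\left(\rho_{BB'}\right)\geq-\log\tr\left[\rho_{BB'}^2\right]$, the expectation $\mathbb{E}\left[\tr\left[\rho_{BB'}^2\right]\right]\leq 2\cdot 2^{-n}$ computed from the pairwise overlaps (correct, since for $x\neq x'$ the bits $f_n\left(x,y\right)\oplus f_n\left(x',y\right)$ are independent and uniform over $y$), and Markov's inequality. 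This yields exactly the claimed bound $n-\log n$ with probability $1-2/n=1-o(1)$, which suffices for the theorem as stated. What you give up is the strength of the concentration: $1-2/n$ versus the paper's near-certainty, which would matter if one wanted to union-bound over additional events or sharpen the failure probability, but is immaterial here. What you gain is that the proof no longer relies on an external concentration result and is elementary throughout.
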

\begin{proof}
Consider running any clean protocol $\Pi$ for $f_n$ on the state $\ket{\Psi_{\textrm{c-s}}}_{A_{\textrm{IN}}B_{\textrm{IN}}R}$ with independent, uniform input distributions. At the end of the protocol, the state will be:
\begin{align*}
\ket{\Psi_{\textrm{c-s}}}_{A_{\textrm{OUT}}B_{\textrm{OUT}}R}&=\sum_x\frac{1}{\sqrt{2^n}}\ket{x}_A\ket{\vec{0}}_{A_0}\left(\sum_y\frac{1}{\sqrt{2^n}}\left(-1\right)^{f_n\left(x,y\right)}\ket{y}_B\ket{y}_{B'}\right)\ket{\vec{0}}_{B_0}\ket{-}_{B_{\textrm{ans}}}\ket{\Phi}_{A_EB_E}\ket{x}_R\\
&\equiv\sum_{x}\frac{1}{\sqrt{2^n}}\ket{x}_A\ket{\vec{0}}_{A_0}\ket{\psi_x}_{BB'}\ket{\vec{0}}_{B_0}\ket{-}_{B_{\textrm{ans}}}\ket{\Phi}_{A_EB_E}\ket{x}_R
\end{align*}
where:
\begin{equation*}
\ket{\psi_x}_{BB'}=\sum_y\frac{1}{\sqrt{2^n}}\left(-1\right)^{f_n\left(x,y\right)}\ket{y}_B\ket{y}_{B'}.
\end{equation*}

The reduced state on systems $B_{\textrm{OUT}}$ and $R$ after the protocol is:
\begin{equation*}
\rho_{B_{\textrm{OUT}}R}=\sum_x \frac{1}{2^n}\ketbra{\psi_x}{\psi_x}_{BB'}\otimes\ketbra{0}{0}_{B_0}\otimes\ketbra{-}{-}_{B_{\textrm{ans}}}\otimes\tr_{A_E}\left[\ketbra{\Phi}{\Phi}_{A_EB_E}\right]\otimes\ketbra{x}{x}_R.
\end{equation*}
Note that $I\left(B_{\textrm{IN}}:R\right)=0$ while:
\begin{align*}
I\left(B_{\textrm{OUT}}:R\right)&=S\left(B_{\textrm{OUT}}\right)+S\left(R\right)-S\left(B_{\textrm{OUT}} R\right)\\
&=S\left(B_{\textrm{OUT}}\right)+S\left(R\right)-S\left(A_{\textrm{OUT}}\right) &&\textrm{(as $\ket{\Psi_{\textrm{c-s}}}_{A_{\textrm{OUT}}B_{\textrm{OUT}}R}$ is pure)}\\
&=S\left(BB'\right)+S\left(B_E\right)+S\left(R\right)-S\left(A\right)-S\left(A_E\right)&&\textrm{(as $\ket{\Phi}_{A_EB_E}$ is product with the other systems)}\\
&=S\left(BB'\right) &&\textrm{(as $\ket{\Psi_{\textrm{c-s}}}_{A_{\textrm{OUT}}B_{\textrm{OUT}}R}$ is symmetric in $A$ and $R$)}.
\end{align*}

Now, using Lemma~\ref{le:inf flow} and the fact that the quantum conditional mutual information (and hence $\textit{QLB}$) is non-negative, we have:
\begin{equation*}
\textit{QLA}\left(\Pi,\ket{\Psi_{\textrm{c-s}}}_{A_{\textrm{IN}}B_{\textrm{IN}}R}\right)\geq S\left(\rho_{BB'}\right),
\end{equation*}
where $\rho_{BB'}=\frac{1}{2^n}\sum_{x}\ketbra{\psi_x}{\psi_x}_{BB'}$. \cite[Theorem IV.1.]{montanaro2007lower} tells us that for a function $f_n$ chosen uniformly at random:
\begin{equation*}
\textrm{Pr}\left[S\left(\rho_{BB'}\right)<\left(1-\delta\right)n\right]\leq e^{-\left(2^{\delta n}-1\right)^2/2}.
\end{equation*}
Setting $\delta=\frac{\log n}{n}$ in the above expression gives us that for nearly all $f_n$:
\begin{equation*}
\textit{QLA}\left(\Pi,\ket{\Psi_{\textrm{c-s}}}_{A_{\textrm{IN}}B_{\textrm{IN}}R}\right)\geq n-\log n.
\end{equation*}

Using Lemma~\ref{le:cs vs ss} we can use this last expression to lower bound $\textit{QLA}\left(\Pi,\ket{\Psi_{\textrm{c-c}}}_{A_{\textrm{IN}}B_{\textrm{IN}}R}\right)$:
\begin{equation*}
\textit{QLA}\left(\Pi,\ket{\Psi_{\textrm{c-c}}}_{A_{\textrm{IN}}B_{\textrm{IN}}R}\right)\geq n-\log n.
\end{equation*}
Finally, using Corollary~\ref{co:clean} gives a lower bound on $\textit{QLB}\left(\Pi,\ket{\Psi_{\textrm{c-c}}}_{A_{\textrm{IN}}B_{\textrm{IN}}R}\right)$:
\begin{equation*}
\textit{QLB}\left(\Pi,\ket{\Psi_{\textrm{c-c}}}_{A_{\textrm{IN}}B_{\textrm{IN}}R}\right)\geq n-\log n,
\end{equation*}
and summing these two expressions gives that for most functions, any clean protocol is such that:
\begin{align*}
\textit{QIC}\left(\Pi,\ket{\Psi_{\textrm{c-c}}}_{A_{\textrm{IN}}B_{\textrm{IN}}R}\right)&=\frac{1}{2}\bigl[\textit{QLA}\left(\Pi,\ket{\Psi_{\textrm{c-c}}}_{A_{\textrm{IN}}B_{\textrm{IN}}R}\right)+\textit{QLB}\left(\Pi,\ket{\Psi_{\textrm{c-c}}}_{A_{\textrm{IN}}B_{\textrm{IN}}R}\right)\bigr]\\
&\geq n-\log n.
\end{align*}
\end{proof}

Finally, as a corollary of this result, by using the fact that the information cost of a quantum protocol on any input state lower bounds the number of qubits exchanged, we obtain Theorem~\ref{th:rand ent}.

\subsection{Proof of lower bound in Lemma~\ref{le:IP in phase}} \label{ap:phase lb}

In this Appendix, we adapt the techniques used in proving Theorem~\ref{th:rand ent} to prove the lower bound in Lemma~\ref{le:IP in phase}. As we are considering clean computation and we initialized the answer register in Eqs.~\eqref{eq:class-class} and \eqref{eq:class-sup} in the $\ket{-}$ state (so the result in the computation was stored in the phase), Lemmas \ref{le:inf flow} and \ref{le:cs vs ss} and Corollary \ref{co:clean} also apply to protocols for cleanly computing $\textit{IP}_{n}^{\textit{phase}}$ where there is no answer register.

We now modify the proof of Theorem~\ref{th:rand clean inf} to show that:
\begin{lemma} \label{le:IP leak}
Any protocol $\Pi$ that exactly computes $\textit{IP}_{n}^{\textit{phase}}$ cleanly (with or without pre-shared entanglement) is such that:
\begin{align}
\begin{split}
\textit{QLA}\left(\Pi,\ket{\Psi_{\textrm{c-c}}}_{A_{\textrm{IN}}B_{\textrm{IN}}R}\right)\geq n,\\
\textit{QLB}\left(\Pi,\ket{\Psi_{\textrm{c-c}}}_{A_{\textrm{IN}}B_{\textrm{IN}}R}\right)\geq n,
\end{split}
\end{align}
where here:
\begin{equation}
\ket{\Psi_{\textrm{c-c}}}_{A_{\textrm{IN}}B_{\textrm{IN}}R}=\frac{1}{2^n}\sum_{x,y}\ket{x}_{A}\ket{\vec{0}}_{A_0}\ket{y}_{B}\ket{\vec{0}}_{B_0}\ket{\Phi}_{A_EB_E}\ket{xy}_R.
\end{equation}
\end{lemma}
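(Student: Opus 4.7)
The plan is to adapt the proof of Theorem~\ref{th:rand clean inf} to the specific function $\textit{IP}_n^{\textit{phase}}$, replacing the probabilistic entropy bound (which relied on a random-function argument) with an exact calculation. Since the paragraph preceding the lemma asserts that Lemma~\ref{le:inf flow}, Lemma~\ref{le:cs vs ss} and Corollary~\ref{co:clean} carry over to the phase setting (where the answer is encoded via the initial $\ket{-}$ trick, which here is simply absent because no answer register is needed), the same reduction chain is available: compute $\textit{QLA}$ on the classical--superposition input, then transfer to the classical--classical input via Lemma~\ref{le:cs vs ss}, then symmetrize via Corollary~\ref{co:clean}.

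Concretely, I would first run $\Pi$ on the state $\ket{\Psi_{\textrm{c-s}}}_{A_{\textrm{IN}}B_{\textrm{IN}}R}$ with independent uniform distributions on Alice's and Bob's inputs, where Bob has duplicated his input into an extra register $B'$ via a transversal $\textit{CNOT}$ prior to running $\Pi$. Because the protocol computes $\textit{IP}_n$ in the phase cleanly, its action on the running registers is $\ket{x}_A\ket{y}_B\mapsto(-1)^{x\cdot y}\ket{x}_A\ket{y}_B$ while restoring all ancillas and entanglement. Hence the post-protocol state on $BB'R$ is (up to decoupled ancilla and entanglement factors) $\frac{1}{\sqrt{2^n}}\sum_x \ket{\psi_x}_{BB'}\ket{x}_R$ with
\begin{equation}
\ket{\psi_x}_{BB'}=\frac{1}{\sqrt{2^n}}\sum_y (-1)^{x\cdot y}\ket{y}_B\ket{y}_{B'}.
\end{equation}
Following the same chain of equalities as in Theorem~\ref{th:rand clean inf}, purity and the product structure of the shared entanglement give $I(B_{\textrm{OUT}}:R)=S(\rho_{BB'})$ while $I(B_{\textrm{IN}}:R)=0$, and the non-negativity of $\textit{QLB}$ together with Lemma~\ref{le:inf flow} then yields $\textit{QLA}(\Pi,\ket{\Psi_{\textrm{c-s}}})\geq S(\rho_{BB'})$.

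The key step, where the argument departs from the random-function proof, is the direct evaluation of $S(\rho_{BB'})$. Expanding and summing the phases gives
\begin{equation}
\rho_{BB'}=\frac{1}{2^n}\sum_x\ketbra{\psi_x}{\psi_x}=\frac{1}{2^{2n}}\sum_{y,y'}\Bigl(\sum_x(-1)^{x\cdot(y\oplus y')}\Bigr)\ket{yy}\bra{y'y'}=\frac{1}{2^n}\sum_y\ketbra{yy}{yy},
\end{equation}
so $\rho_{BB'}$ is a uniform mixture of $2^n$ orthogonal states and $S(\rho_{BB'})=n$ \emph{exactly}. This inner-product--specific cancellation replaces the $\Omega(n)$ entropy estimate used for generic functions and avoids any loss.

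Combining the ingredients: $\textit{QLA}(\Pi,\ket{\Psi_{\textrm{c-s}}})\geq n$, then Lemma~\ref{le:cs vs ss} lifts this to $\textit{QLA}(\Pi,\ket{\Psi_{\textrm{c-c}}})\geq n$, and finally Corollary~\ref{co:clean} gives $\textit{QLB}(\Pi,\ket{\Psi_{\textrm{c-c}}})\geq n$. I do not expect a serious obstacle: the machinery from Appendix~\ref{ap:rand ent} is already in place, and the only genuinely new observation is the exact diagonalization of $\rho_{BB'}$ for inner product. The one point that deserves care is checking that Corollary~\ref{co:clean}, originally stated for clean protocols that toggle an answer register, still applies when the output is encoded in a global phase; this is precisely what the prefatory remark to the lemma asserts, since initializing the answer register in $\ket{-}$ converts a standard clean protocol into one whose only net effect is the phase $(-1)^{x\cdot y}$ on $\ket{x}_A\ket{y}_B$, matching the hypothesis of the corollary.
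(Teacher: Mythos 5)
Your proposal is correct and follows essentially the same route as the paper's proof: run the protocol on the classical--superposition input, use Lemma~\ref{le:inf flow} with non-negativity of $\textit{QLB}$ to get $\textit{QLA}\geq S(\rho_{BB'})$, then lift via Lemma~\ref{le:cs vs ss} and symmetrize via Corollary~\ref{co:clean}. The only cosmetic difference is that you diagonalize $\rho_{BB'}$ in the $\ket{yy}$ basis by Fourier cancellation, whereas the paper simply observes that the states $\ket{\psi_x}$ already form an orthonormal set, so the uniform mixture has entropy exactly $n$ --- the same calculation in different clothing.
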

\begin{proof}
Consider running $\Pi$ on registers $A$, $B$, $A_0$, $B_0$ and $A_EB_E$ of the input state:
\begin{equation*}
\ket{\Psi_{\textrm{c-s}}}_{A_{\textrm{IN}}B_{\textrm{IN}}R}=\frac{1}{2^n}\sum_{x,y}\ket{x}_{A}\ket{\vec{0}}_{A_0}\ket{y}_{B}\ket{y}_{B'}\ket{\vec{0}}_{B_0}\ket{\Phi}_{A_EB_E}\ket{x}_R.
\end{equation*}
At the end of the protocol, the state will be:
\begin{align*}
\ket{\Psi_{\textrm{c-s}}}_{A_{\textrm{OUT}}B_{\textrm{OUT}}R}&=\sum_x\frac{1}{\sqrt{2^n}}\ket{x}_A\ket{\vec{0}}_{A_0}\left(\sum_y\frac{1}{\sqrt{2^n}}\left(-1\right)^{x\cdot y}\ket{y}_B\ket{y}_{B'}\right)\ket{\vec{0}}_{B_0}\ket{\Phi}_{A_EB_E}\ket{x}_R\\
&\equiv\sum_{x}\frac{1}{\sqrt{2^n}}\ket{x}_A\ket{\vec{0}}_{A_0}\ket{\psi_x}_{BB'}\ket{\vec{0}}_{B_0}\ket{\Phi}_{A_EB_E}\ket{x}_R,
\end{align*}
where:
\begin{equation*}
\ket{\psi_x}_{BB'}=\sum_y\frac{1}{\sqrt{2^n}}\left(-1\right)^{x\cdot y}\ket{y}_B\ket{y}_{B'}.
\end{equation*}

The reduced state on systems $B_{\textrm{OUT}}$ and $R$ after the protocol is:
\begin{equation*}
\rho_{B_{\textrm{OUT}}R}=\sum_x \frac{1}{2^n}\ketbra{\psi_x}{\psi_x}_{BB'}\otimes\ketbra{0}{0}_{B_0}\otimes\tr_{A_E}\left[\ketbra{\Phi}{\Phi}_{A_EB_E}\right]\otimes\ketbra{x}{x}_R.
\end{equation*}
Note that $I\left(B_{\textrm{IN}}:R\right)=0$ while, as per the proof of Theorem~\ref{th:rand clean inf}:
\begin{align*}
I\left(B_{\textrm{OUT}}:R\right)&=S\left(B_{\textrm{OUT}}\right)+S\left(R\right)-S\left(B_{\textrm{OUT}} R\right)\\
&=S\left(B_{\textrm{OUT}}\right)+S\left(R\right)-S\left(A_{\textrm{OUT}}\right)\\
&=S\left(BB'\right).
\end{align*}

Now, using Lemma~\ref{le:inf flow} and the fact that the quantum conditional mutual information is non-negative, we have:
\begin{equation*}
\textit{QLA}\left(\Pi,\ket{\Psi_{\textrm{c-s}}}_{A_{\textrm{IN}}B_{\textrm{IN}}R}\right)\geq S\left(\rho_{BB'}\right),
\end{equation*}
where $\rho_{BB'}=\frac{1}{2^n}\sum_{x}\ketbra{\psi_x}{\psi_x}_{BB'}$. Now, as $\left\{\ket{\psi_x}\right\}$ is an orthonormal set of vectors, we have:
\begin{equation*}
\textit{QLA}\left(\Pi,\ket{\Psi_{\textrm{c-s}}}_{A_{\textrm{IN}}B_{\textrm{IN}}R}\right)\geq n.
\end{equation*}

Using Lemma~\ref{le:cs vs ss} we now use this last expression to lower bound $\textit{QLA}\left(\Pi,\ket{\Psi_{\textrm{c-c}}}_{A_{\textrm{IN}}B_{\textrm{IN}}R}\right)$ by $n$. Corollary~\ref{co:clean} then gives that $\textit{QLB}\left(\Pi,\ket{\Psi_{\textrm{c-c}}}_{A_{\textrm{IN}}B_{\textrm{IN}}R}\right)\geq n$. 
\end{proof}

To prove the lower bound on $\textit{Q}_{\textit{clean}}\left(\textit{IP}_{n}^{\textit{phase}}\right)$, we now show that without any pre-shared entanglement $n$ qubits of communication are not sufficient to leak both $n$ bits from Alice to Bob and $n$ bits from Bob to Alice when the players are given independent uniformly distributed inputs.
\begin{lemma}
We have:
\begin{equation}
\textit{Q}_{\textit{clean}}\left(\textit{IP}_{n}^{\textit{phase}}\right)\geq n+1.
\end{equation}
\end{lemma}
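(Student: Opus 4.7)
The plan is to combine Lemma~\ref{le:IP leak} with a sharpened form of the standard bound $\textit{QIC}\leq m$ (with $m=\sum_i|C_i|$ the total number of qubits communicated), exploiting the absence of pre-shared entanglement. Applying Lemma~\ref{le:IP leak} to $\ket{\Psi_{\textrm{c-c}}}$ with independent uniform inputs gives $\textit{QLA}\geq n$ and $\textit{QLB}\geq n$, hence $\textit{QIC}(\Pi,\ket{\Psi_{\textrm{c-c}}})=\tfrac{1}{2}(\textit{QLA}+\textit{QLB})\geq n$. The generic per-term inequality $I(C_i:R|\cdot_{i-1})\leq 2|C_i|$ plugged into the QIC formula only delivers $m\geq n$; to squeeze out the extra $+1$, I would tighten the very first term of the QIC sum.

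Without loss of generality assume Alice sends the first message. Since $\ket{\Phi}_{A_EB_E}$ is absent, $B_0$ is just $\ket{y}_B$ together with a fixed local ancilla, and the purifier decomposes as $R=R_AR_B$ storing classical copies of $x$ and $y$. A short chain-rule computation, using that $R_B$ is redundant given $B_0$ and that the input distribution is a product, reduces $I(C_1:R|B_0)$ to the plain mutual information $I(C_1:R_A)$. That last quantity equals the Holevo information $\chi(\{2^{-n},\rho_{C_1}^{x}\})$ of the ensemble Alice's first unitary prepares on $C_1$, so Holevo's theorem bounds it by $\log\dim(C_1)=|C_1|$ rather than the generic $2|C_1|$.

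Plugging this improved first term into the QIC sum gives
\begin{equation*}
\textit{QIC}(\Pi,\ket{\Psi_{\textrm{c-c}}})\leq\frac{|C_1|}{2}+\sum_{i\geq 2}|C_i|=m-\frac{|C_1|}{2}\leq m-\frac{1}{2},
\end{equation*}
since a genuine first message satisfies $|C_1|\geq 1$. Combining with $\textit{QIC}\geq n$ yields $m\geq n+\tfrac{1}{2}$, and integrality of $m$ forces $m\geq n+1$. The case where Bob speaks first is symmetric: the Holevo tightening then hits the first term of $\textit{QLB}$ instead of $\textit{QLA}$, and the same arithmetic goes through.

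The main technical obstacle is the Holevo reduction for the first message. It requires exploiting both the classical-classical structure of $\ket{\Psi_{\textrm{c-c}}}$ and the fact that there is no initial entanglement across the bipartition, so that the $i=1$ term loses its usual dense-coding factor of two; one must be careful that the chain-rule decomposition really does collapse the conditional mutual information to a vanilla Holevo quantity. Once that is in place, the remainder of the proof is a one-line substitution plus integer rounding.
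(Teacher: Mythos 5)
Your proposal is correct and follows essentially the same route as the paper: both combine Lemma~\ref{le:IP leak} with the observation that, absent pre-shared entanglement, the very first message cannot achieve the dense-coding rate of $2$ bits per qubit, which forces the total communication above $n$. The paper phrases the key step as $I\left(C_1:R\right)=S\left(C_1\right)-S\left(C_1|R\right)\leq\log\dim\left(C_1\right)$ using separability of $\rho_{RC_1}$, which is the same fact as your Holevo bound on $I\left(C_1:R_A\right)$.
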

\begin{proof}
As $0\leq\frac{1}{2}I\left(C:R|B\right)\leq\log\textrm{dim}\left(C\right)$, each qubit of communication can leak at most 2 bits of information between the players. Hence if we have a clean $n$-qubit protocol for $\textit{IP}_n$, by Lemma~\ref{le:IP leak} each qubit must leak at least 2 bits. We will argue that if the players do not initially share any entanglement then this cannot happen.

Let $\Pi$ be said $n$-qubit clean protocol. Consider the protocol $\Pi'$ constructed as follows:
\begin{enumerate}
\item Both players create a copy of their classical input.
\item The players run $\Pi$ on their original input, keeping the copies to one side.
\item The players cleanly erase their input copies.
\end{enumerate}
Obviously $\Pi'$ is also a clean $n$-qubit protocol for computing inner product in the phase and each qubit must still leak at least 2 bits of information.

If the players' inputs are uniformly distributed, after Step 1 the total, purified state of the protocol can be taken to be:
\begin{equation*}
\frac{1}{{2^n}}\sum_{x,y}\ket{x}_A\ket{\vec{0}}_{A_0}\ket{0}_{C}\ket{x}_{A'}\ket{y}_{B}\ket{y}_{B'}\ket{xy}_R,
\end{equation*}
where $A$ and $B$ denotes the input registers, $R$ their purification, $A'$ and $B'$ the copies, $C$ the first qubit that Alice (w.l.o.g.) will send to Bob and $A_0$ Alice's additional ancilla qubits. To determine her message, Alice now applies a unitary to her share of the qubits (excluding the copy) resulting in:
\begin{equation*}
\frac{1}{{2^n}}\sum_{x,y}U\left(\ket{x}_A\ket{\vec{0}}_{A_0}\ket{0}_{C}\right)\ket{x}_{A'}\ket{y}_{B}\ket{y}_{B'}\ket{xy}_R.
\end{equation*}
Now consider:
\begin{equation*}
I\left(C:R|BB'\right)=I\left(C:R\right)=S\left(C\right)-S\left(C|R\right).
\end{equation*}
Obviously $S\left(C\right)\leq1$ while $\rho_{RC}$ is a separable state so $S\left(C|R\right)$ is non-negative. Hence the first qubit of communication leaks strictly less than 2 bits of information and $\Pi'$ cannot have been a clean protocol that computed $\textit{IP}_{n}^{\textit{phase}}$.
\end{proof}

\end{document}